\newtheorem{thm}{Theorem}
\date{January 2007}
\begin{document}

\title{On the Degrees of Freedom of the Compound MIMO Broadcast Channels with Finite States
\thanks{Part of the materials of this paper has been independently reported in~\cite{Jafar_compund}.}}

\author{Mohammad Ali Maddah-Ali,\\
Department of Electrical and Computer Engineering,\\
University of California - Berkeley
}

 \maketitle
%\markboth{IEEE Transactions on Information Theory (Submitted)}
%\markboth{Submitted to IEEE International Symposium on Information
%Theory(ISIT 2007)}{}

\begin{abstract}
Multiple-antenna broadcast channels with $M$ transmit antennas and $K$ single-antenna receivers is considered, where the channel of receiver $r$ takes one of the  $J_r$ finite values. It is assumed that the channel states of each receiver are randomly selected from $\mathds{R}^{M\times 1}$ (or from $\mathds{C}^{M\times 1}$). It is shown that no matter what $J_r$ is, the degrees of freedom (DoF) of $\frac{MK}{M+K-1}$ is achievable. The achievable scheme relies on the idea of interference alignment at receivers, without exploiting the possibility of cooperation among transmit antennas. It is proven that if $J_r \geq M$, $r=1,\ldots,K$, this scheme achieves the optimal DoF. This results implies that when the uncertainty of the base station about the channel realization is considerable, the system loses the gain of cooperation. However, it still benefits from the gain of interference alignment. In fact, in this case, the compound broadcast channel is treated as a compound X channel.

Moreover, it is shown that when the base station knows the channel states of some of the receivers, a combination of transmit cooperation and interference alignment would achieve the optimal DoF.

Like time-invariant $K$-user interference channels,  the naive vector-space approaches of interference management seem insufficient to achieve the optimal DoF of this channel. In this paper, we use the Number-Theory approach of alignment, recently developed by Motahari et al.~\cite{abolfazl_k_2}. We extend the approach of~\cite{abolfazl_k_2} to complex channels as well, therefore all the results that we present are valid for both real and complex channels.
\end{abstract}

\begin{keywords}
Compound Broadcast Channel, Compound X Channel, Interference Alignment, Number Theory, Diophantine Approximation.
\end{keywords}

\section{Introduction}
We consider a real (or complex) compound broadcast channel with  $M$ transmit antennas and $K$ receivers, each equipped with a single antenna. The channel of the receiver $r$ takes one of the $J_r$ finite values. This channel can be modeled as
\begin{eqnarray}\label{eq:channel_model}
y_r^{\{s\}}[m]=\mathbf{h}_r^{{\{s\}}^{\dagger}} \mathbf{x}[m]+z_r^{\{s\}}[m], \quad r=1,\ldots,K, \quad s=1,\ldots,J_r,
\end{eqnarray}
where $\mathbf{x}[m] \in \mathds{R}^M (or \mathds{C}^M) $,  $\mathbb{E}(\mathbf{x}^{\dagger} [m] \mathbf{x}[m])\leq P$,  $z_s[m] \sim \mathcal{N}(0,1)$ (or $~\sim \mathcal{CN}(0,1)$), and the sequences $z_r[m]$'s are i.i.d. and mutually independent. In addition, $\mathbf{h}_r^{\{s\}} = [{h_{r1}^{\{s\}}}^{\dagger},\ldots, {h_{rM}^{\{s\}}}^{\dagger}]^{\dagger} \in \mathds{R}^M$ (or $\mathds{C}^M$) . We assume that the channel state of each  receiver is perfectly known at the user, but not at the transmitter. However, the transmitter is aware of the set of all possible channel realizations.

An important measure to approximate the capacity of a wireless channel is known as degrees of freedom (DoF). The DoF of a channel shows how the capacity of the real channel is scaled with $\frac{1}{2}\log_2 P$ or how the capacity of the complex channel is scaled with $\log_2 P$ , for large transmit power $P$. Formally, the optimal real DoF of a real wireless channel, is given by 
\begin{eqnarray}
d_{real}= \lim_{P \rightarrow \infty} \frac{C_{sum}}{\frac{1}{2}\log_2(P)},
\end{eqnarray}
and the optimal complex DoF of a complex wireless channel is given by
\begin{eqnarray}
d_{complex}= \lim_{P \rightarrow \infty} \frac{C_{sum}}{\log_2(P)},
\end{eqnarray}
where $C_{sum}$ is the sum-capacity of the channel. In fact, DoF gives a first-order approximation of the sum-capacity as $C_{sum} = \frac{d}{2}\log_2(P)+ o(P)$ for real channels and  $C_{sum} = d\log_2(P)+ o(P)$ for complex ones.

We note that when a stream of data is transmitted to one receiver, it causes interference over the other receivers. The problem of characterizing the optimal DoF of a channel is essentially equivalent to developing the most efficient way of interference management.
%
%Interference management has been the subject of many research activities in the field of wireless communication. With the increasing demand for high-data rate communication, sharing the spectrum between concurrent links has emerged as an unavoidable technique.  Apparently, the bottle neck of this approach is to manage the cross-interference of the concurrent links.

For the MIMO broadcast channel, the most well-know approach for interference management is known as zero-forcing. In this approach, the base station uses a channel-inversion precoder to force the cross-interference components to be zero and guarantee interference-free receivers.  Using this technique, we can achieve the optimal DoF of the channel~\eqref{eq:channel_model}, if $J_r=1$ for $r=1,\ldots,K$. In this case, the capacity of the channel is fully characterized~\cite{Br_Cap_RG_J}  and the DoF of the system is proven to be  $\min\{K,M\}$.  In~\cite{Hannan_BCCM}, it is shown that if the zero-forcing precoder is expanded from space dimension to space/time dimensions,  it achieves the optimal DoF of some compound MIMO  broadcast channels~\eqref{eq:channel_model}. More precisely, it is shown that the space/time zero-forcing approach achieves the optimal DoF of $1+\frac{M-1}{M}$ for $K=2$, $J_1=1$, and $J_2=M$, and achieves the optimal DoF of $\frac{2M}{M+1}$ for $K=2$ and $J_1=J_2=M$ ~\cite{Hannan_BCCM}. Moreover, it is shown that when $K=2$, $J_1=J_2=J \geq M$, this scheme yields the DoF of $\frac{2J}{2J-M+1}$. However, in terms of converse, it is proven that the DoF of the later channel is upper-bounded by $\frac{2M}{M+1}$. Then, it is conjectured that the gap between inner and outer bounds is duo to the looseness of the outer-bound, and the achievable scheme is optimal. In other words, it is believed that as $J$ increases, the optimal DoF converges to one. In another effort, in~\cite{Lapidoth}, it is shown that in ergodic MIMO broadcast channels, with $M=2$ and $K=2$, where the channel state information is not known at the transmitter, the DoF is upper-bounded by $\frac{4}{3}$. This is the same as the outer-bound of~\cite{Hannan_BCCM} for the corresponding compound channel.

In the context of interference channels and X channels, the concept of {\em Interference Alignment} is the key idea to achieve the optimal DoF. This technique, which was originally introduced in~\cite{maddah_x_tech1,maddahali2008com} and followed by~\cite{jafar2008dfr}, is based on  managing the interference to be less severe. at the receivers. In fact, the interference arriving from different transmitters are aligned at the receiver, such that it occupies the minimum number of signaling dimensions. In~\cite{cadambe2008iaa}, this idea is used to show that in the time-varying interference channels with $K$ users, the DoF of the system is $\frac{K}{2}$. In addition, in~\cite{cadambe2008dfw}, it is proven that the DoF of the time-varying $X$ network with $M$ transmitters and $K$ receivers is $\frac{MK}{M+K-1}$.  The idea of~\cite{cadambe2008iaa,cadambe2008dfw}, is to extend the precoder across time and exploit the time-variation of the channel to satisfy the conditions required for interference alignment. However, if the channel is single-antenna and not time-varying or frequency-selective, then the vector-space approach used in~\cite{cadambe2008iaa,cadambe2008dfw} falls short and does not achieve the optimum DoF. In fact, since the channel is fixed across the time, then the channel parameters do not provide enough freedom to simultaneously satisfy all the conditions required for the vector-space interference alignment.

In~\cite{Bresler-Parekh-tse}, followed by \cite{ Sridharan,sridharan3llc}, the idea of interference alignment is extended from space/time/frequency dimensions to signal level dimensions. In \cite{abolfazl-shahab-amir,Etkin-Ordentlich}, it is shown the theory of Diophantine approximation can be used to align the interference based on the properties of rational and irrational numbers. It is proven that this approach yields the optimal DoF of a certain class of the time-invariant interference channels.

Finally, in \cite{Abolfazl_X,abolfazl_k_2}, it is shown that the optimal DoF of $K$-user time-invariant interference channels is greater that one, almost surely. In~\cite{abolfazl_k_2}, it is proven that the optimal DoF is indeed $\frac{K}{2}$ for almost all $K$-user interference channels.
The achievable scheme is established based on a recent version of Khintchine-Groshev theorem. The result of~\cite{abolfazl_k_2} reveals that the field of real numbers is rich enough to transform a static interference channel to a pseudo multiple-antenna system and mimic the vector-space approaches of interference alignment. This result breaks the barrier of achieving the optimal DoF of the static channels, in which the vector-space approaches fail.

The signaling scheme presented in this work, is based on the results of~\cite{abolfazl_k_2}. We note that the machinery developed in~\cite{abolfazl_k_2} is derived for the real channels. A conventional approach to deal with a complex channel is to transform it to a real channel by decomposing the real and imaginary components. In the transformed channel, the real and imaginary parts of the channel coefficients appear twice in the channel matrix. However, the approach of~\cite{abolfazl_k_2} relies on the independency of the channel coefficients. Therefore,  applying this approach for the transformed channel is not straight-forward.  In this paper, we borrow some results from Number Theory to extend this approach to the complex channels.

\section{Main Contributions}

\begin{thm}\label{thm:ach1}
In the compound broadcast channel, modeled by~\eqref{eq:channel_model}, both DoF of $\frac{MK}{M+K-1}$ is achievable, almost surely.
\end{thm}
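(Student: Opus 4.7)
The plan is to deliberately give up the cooperation among the $M$ transmit antennas and treat the compound broadcast channel as a compound X channel with $M$ single-antenna transmitters and $K$ single-antenna receivers. Under this view the $m$-th antenna carries an independent substream $W_{m,r}$ for each destination $r$, and receiver $r$ must recover the $M$ substreams $W_{1,r},\ldots,W_{M,r}$ from its one-dimensional observation regardless of which of its $J_r$ possible channels is active. Since $\frac{MK}{M+K-1}$ is precisely the DoF of an $M\times K$ X channel, the theorem reduces to showing that this X-channel DoF still survives when each receiver is replaced by the ensemble of its $J_r$ hypotheses simultaneously.

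For the achievability I would invoke the rational-dimensions alignment machinery of~\cite{abolfazl_k_2}. Fix a large integer $n$ and encode each substream $W_{m,r}$ as a weighted $\mathbb{Z}$-combination of product monomials of the form $\prod_{(r',\ell,s')} (h_{r'\ell}^{\{s'\}})^{t_{r',\ell,s'}}$, where the exponent triples range over the Cadambe-Jafar X-channel alignment pattern but enlarged to include every compound state $s'$ of every receiver rather than only one. The point of this construction is that at receiver $r$ in state $s$, multiplication of an interfering substream (one intended for some $r'\neq r$) by the channel coefficient $h_{r\ell}^{\{s\}}$ lands inside the monomial alphabet already carried by other substreams destined to $r'$, so all interference collapses into a single rational sub-lattice, while the $M$ desired substreams populate $M$ fresh rational directions. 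Carrying out the standard X-channel counting as $n\to\infty$ delivers the ratio $\frac{MK}{M+K-1}$.

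The main obstacle will be turning this rational-dimension picture into an actual achievable rate simultaneously at all $\sum_{r}J_{r}$ (receiver, state) pairs, rather than at only $K$ receivers as in the non-compound X channel. This is where the Khintchine-Groshev minimum-distance theorem used in~\cite{abolfazl_k_2} enters the argument: since every entry of the finite set $\{h_{r\ell}^{\{s\}}\}$ is drawn independently from an absolutely continuous distribution, the whole collection is almost surely generic in the diophantine sense, and the Khintchine-Groshev lower bound on the spacing between distinct rational combinations applies at every receiver in every state at once. A standard nearest-neighbour decoder then converts this spacing bound into vanishing error probability at the prescribed sum rate, and dividing by $\tfrac12\log_2 P$ (resp.\ $\log_2 P$) gives the claimed DoF; for complex channels the same route is taken with the complex extension of Khintchine-Groshev developed later in the paper, so $J_r$ never enters the final expression because increasing it only fattens the list of alignment constraints, not the number of desired rational directions at any single observation.
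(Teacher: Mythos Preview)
Your proposal is correct and matches the paper's approach essentially line for line: treat the compound broadcast channel as a compound $M\times K$ X channel, build the modulation pseudo-vectors as monomials in the cross-channel coefficients enlarged over all compound states (the paper's sets $\mathcal{B}_r$), and invoke the Khintchine--Groshev machinery of~\cite{abolfazl_k_2} to certify separability at every (receiver, state) pair simultaneously. The only imprecision is your phrase ``a single rational sub-lattice'': in fact the interference at receiver $r$ sits in $K-1$ separate aligned blocks (one per unintended destination $\hat r\neq r$), but this does not affect the $M/(M{+}K{-}1)$ counting you invoke, so the conclusion stands.
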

To derive this result, we simply divide the message of each receiver into $M$ sub-messages and then each transmitter just sends one of the sub-messages. The transmission is such that the corresponding receiver can decode all of these sub-messages. However, the contribution of these sub-messages  at other receivers are aligned. The important point is that the interference alignment is guaranteed for any channel realization.
In the proposed method, there is no cooperation among the transmitters. In other words, the output of the transmitters are independent.

We note that this result is in contrast with conjecture of~\cite{Hannan_BCCM} that the DoF of the channel converges to one if the number of states becomes large. Indeed, we show that the gap between the inner-bound and outer-bound in~\cite{Hannan_BCCM} is due to the inefficiency of the achievable scheme, and not because looseness of the outer-bound. However, in this problem, similar to time-invariant $K$-user interference channel, the vector-space approaches like zero-forcing fail to achieve the optimal DoF. The scheme that we use here is based on the machinery developed in~\cite{abolfazl_k_2} to achieve the DoF of $K$-user time-invariant interference channels.

\begin{thm}\label{thm:ach2}
In the compound broadcast channel~\eqref{eq:channel_model}, if $J_r\geq M$, for $r=1,\ldots, M$, then the optimal DoF is $\frac{MK}{M+K-1}$, almost surely.
\end{thm}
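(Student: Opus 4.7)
The achievability of $\tfrac{MK}{M+K-1}$ is already delivered by Theorem~\ref{thm:ach1}, so the only task is to establish the matching converse. My plan is to reduce the compound MIMO broadcast channel with $J_r \geq M$ to an $M$-transmitter, $K$-receiver X channel, whose sum DoF is known (from \cite{cadambe2008dfw,abolfazl_k_2}) to be $\tfrac{MK}{M+K-1}$, and then invoke that bound.

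First, I would observe that increasing the compound uncertainty at a receiver can only shrink the capacity region: any code that works for $J_r' > J_r$ states also works for any subset of $J_r$ of them. Hence, without loss of generality, it suffices to prove the upper bound when $J_r = M$ exactly for every $r$. The key structural fact is then that, because the $M$ channel vectors $\{\mathbf{h}_r^{\{s\}}\}_{s=1}^{M}$ are drawn generically from $\mathds{R}^M$, they span the whole input space almost surely. In other words, the collection of the $M$ observations available across all states of a single receiver is a full-rank linear transformation of $\mathbf{x}$. This is the feature that eliminates any possibility of transmit beamforming: no non-trivial linear direction in $\mathds{R}^M$ can be simultaneously nulled out at receiver $r$ for all $M$ of its states.

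Next, I would make the reduction to the X channel precise through a genie. Specifically, I would hand each receiver $r$ side information that mimics the no-cooperation structure of an X channel: for instance, pass to each of the $M$ states of receiver $r$ the outputs of a cleverly chosen subset of the other state observations, so that the enhanced compound BC dominates, in the usual sense of capacity inclusion, the $M$-transmitter, $K$-receiver X channel (thinking of each transmit antenna as an independent virtual transmitter, which is legitimate precisely because the compound constraint prevents cooperation from being exploited when the channel directions are unknown and generic). Any achievable rate tuple in the compound BC then maps to an achievable tuple in the X channel, after which the Cadambe--Jafar / Motahari et al.\ outer bound $\sum_r R_r \leq \tfrac{MK}{M+K-1}\log P + o(\log P)$ applies and finishes the proof.

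The main obstacle will be making the genie step rigorous: I must show that the information provided to the enhanced receivers does not inflate the rate region beyond the target $\tfrac{MK}{M+K-1}$, and that the resulting ``virtual'' channel is genuinely no easier than the X channel despite the presence of a single physical transmitter with joint access to all messages. This requires carefully exploiting the generic full-rank property of the per-receiver state matrix $\mathbf{H}_r$ to argue that joint encoding across antennas provides no DoF advantage, i.e., that the transmit cooperation degree of freedom collapses under the compound constraint. Everything else---passage from the converse on $J_r=M$ to $J_r\geq M$, and the invocation of the known X channel outer bound---is routine.
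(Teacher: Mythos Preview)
Your reduction has the direction backwards, and this is not a technicality you can patch with a genie. The $X$-channel outer bound $\tfrac{MK}{M+K-1}$ of~\cite{cadambe2008dfw} is proved for a system in which the $M$ transmitters are \emph{forced} to encode independently; that constraint is what makes the bound hold. In the compound broadcast channel the base station jointly encodes all messages across all $M$ antennas, so the compound BC is a \emph{relaxation} of the $X$ channel, not a restriction of it. A genie that hands receivers extra observations can only enlarge the capacity region further; it can never impose an encoding constraint on the transmitter. Consequently no genie-to-the-receivers argument can turn the compound BC into something dominated by the $X$ channel. When you write that you must ``argue that joint encoding across antennas provides no DoF advantage,'' you are not naming a lemma on the way to the converse---you are restating the converse itself, and your proposal gives no mechanism for proving it.

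The paper's converse does not go through the $X$ channel at all. It proves directly, via Theorem~\ref{thm:main}, the single inequality
\[
d_1+\cdots+d_{r-1}+Md_r+d_{r+1}+\cdots+d_K \;\le\; M
\]
for each $r$ with $J_r\ge M$, and then sums the $K$ inequalities to obtain $(M+K-1)\sum_r d_r\le MK$. The proof of Theorem~\ref{thm:main} enhances the channel by giving receiver $j$ the outputs $y_{j+1},\ldots,y_{K-1}$ together with \emph{all $M$ state outputs} of receiver $r$, producing a physically degraded chain $\mathbf{x}\leftrightarrow\mathbf{y}_1\leftrightarrow\cdots\leftrightarrow\mathbf{y}_{K-1}\leftrightarrow(y_K,\ldots,y_{M+K-1})$. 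One then applies the degraded-broadcast capacity characterization (Theorem~\ref{thm:TD_EX}, extending~\cite{Dig_Tse}) and bounds the resulting entropy differences; the full-rank property of $\mathbf{H}_K=[\mathbf{h}_K,\ldots,\mathbf{h}_{M+K-1}]$ is used exactly once, to show that $\mathds{h}(y_j\mid\mathbf{y}_K,v_j)$ is bounded independently of $P$ because $y_j$ is a noisy linear combination of the entries of $\mathbf{y}_K$. This is where the ``compound uncertainty kills cooperation'' intuition is actually cashed out, and it requires the degraded-BC machinery rather than an $X$-channel reduction.
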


This theorem states that, the achievable scheme of Theorem~\ref{thm:ach1} is optimal when the number of states for each receiver exceeds the number of transmit antennas. Note, in this case, the uncertainty of the base station about the channel gains is high.  Remember that in MIMO Broadcast channel when the channel realization is unique, we achieve the optimal DoF through cooperation among transmitters. However, in the achievable scheme of Theorem~\ref{thm:ach1}, there is no cooperation among transmitters. The important message of this theorem is that when the uncertainty of the base station about the channel is considerable, we lose the gain of cooperation at the transmitter, but still we gain the possibility of interference alignment.

\begin{thm}\label{thm:ach3}
In the compound broadcast channel~\eqref{eq:channel_model}, if $K=M$  and $J_r= 1$, for $r=1,\ldots, K-1$ and $J_M \geq M$, then the optimal DoF is $M-1+\frac{1}{M}$,  almost surely.
\end{thm}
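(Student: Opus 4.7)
The target DoF $M-1+\frac{1}{M}$ splits as $d_r=1$ for each $r<M$ and $d_M=\frac{1}{M}$. Since the channels $\mathbf{h}_1,\ldots,\mathbf{h}_{M-1}$ are known to the transmitter, I use transmit cooperation: pick, for each $r<M$, a beamformer $\mathbf{p}_r$ in the two-dimensional subspace of $\mathds{R}^M$ orthogonal to $\{\mathbf{h}_{r'}:r'\neq r,\,r'<M\}$, so that the stream $\tilde u_r$ is invisible at every other known-channel receiver. I also pick the unique one-dimensional direction $\mathbf{v}$ orthogonal to every $\mathbf{h}_r$ with $r<M$, along which receiver $M$'s stream $\tilde w$ is sent, so that $\tilde w$ does not interfere at any known-channel receiver. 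Each receiver $r<M$ therefore sees $(\mathbf{h}_r^{\dagger}\mathbf{p}_r)\tilde u_r+z$ and decodes $\tilde u_r$ at DoF~$1$ with no interference. Receiver $M$ in state $s$ sees the scalar $\sum_{r<M}(\mathbf{h}_M^{\{s\}\dagger}\mathbf{p}_r)\tilde u_r+(\mathbf{h}_M^{\{s\}\dagger}\mathbf{v})\tilde w+z$. To decode $\tilde w$ at DoF $\frac{1}{M}$ against $M-1$ signal-level interferers at each of the $J_M$ states, I invoke the number-theoretic interference-alignment machinery of~\cite{abolfazl_k_2} used in Theorem~\ref{thm:ach1}: the symbols $\tilde u_r,\tilde w$ are drawn from appropriately scaled integer sub-lattices of $\mathds{Z}$, chosen so that at receiver $M$ all $M-1$ interfering terms collapse onto a common signal-level sub-lattice while $\tilde w$ lands on an incommensurate sub-lattice. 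The Khintchine--Groshev-type estimate (together with its complex extension developed earlier in the paper) then guarantees almost-sure decodability at every state.

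\textbf{Converse.} I select $M$ distinct realizations $\mathbf{h}_M^{\{s_1\}},\ldots,\mathbf{h}_M^{\{s_M\}}$ out of the $J_M\ge M$ available ones and treat them as $M$ virtual single-antenna receivers that must each decode the common message $W_M$. Joined with receivers $1,\ldots,M-1$, this yields a virtual $(2M-1)$-receiver MIMO broadcast channel with a common message on the last $M$ receivers. A Fano-plus-genie argument, in which I provide $W_1,\ldots,W_{r-1}$ to receiver $r$ and $W_1,\ldots,W_{M-1}$ to each of the $M$ virtual copies of receiver $M$, yields
\begin{equation*}
\sum_{r=1}^{M-1} n R_r \;+\; M\,n R_M \;\le\; H\!\left(y_1^n,\ldots,y_{M-1}^n,y_M^{\{s_1\},n},\ldots,y_M^{\{s_M\},n}\right)+o(n\log P).
\end{equation*}
Because the stacked $(2M-1)\times M$ channel matrix with rows $\mathbf{h}_1,\ldots,\mathbf{h}_{M-1},\mathbf{h}_M^{\{s_1\}},\ldots,\mathbf{h}_M^{\{s_M\}}$ has rank $M$ almost surely, the joint differential entropy on the right is bounded by $\frac{M n}{2}\log P+o(n\log P)$, giving the DoF inequality $\sum_{r<M}d_r+M d_M\le M$. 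Combined with the single-antenna bound $d_r\le 1$ for $r<M$, the linear program $\max\sum_r d_r$ has optimum $M-1+\frac{1}{M}$, attained at $d_1=\cdots=d_{M-1}=1$ and $d_M=\frac{1}{M}$.

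\textbf{Main obstacle.} The delicate step is the converse: one must telescope the $M$ Fano bounds on $R_M$ into a single joint entropy bound without double-counting against the bounds for receivers $1,\ldots,M-1$. This is where the hypothesis $J_M\ge M$ enters crucially, since it lets me pick $M$ states whose channel vectors span $\mathds{R}^M$ and thereby make the rank argument on the aggregated channel matrix yield the tight multiplier $M$ in front of $d_M$. The achievability, by comparison, is a clean cooperative-zero-forcing augmentation of the scheme of Theorem~\ref{thm:ach1}; its only technical check is generic full-rankness of the beamformers $\mathbf{p}_r$ and $\mathbf{v}$, which holds almost surely by the random-channel hypothesis.
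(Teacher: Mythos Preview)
Your converse has the right skeleton: the inequality $\sum_{r<M}d_r+M d_M\le M$ together with $d_r\le 1$ is exactly what the paper uses. The paper derives it via a degraded broadcast reduction (give receiver $r$ also $y_{r+1},\ldots,y_{M-1}$ and all $M$ copies of receiver $M$, then invoke the capacity region of the degraded broadcast channel), rather than the bare Fano-plus-genie telescoping you describe. As written your telescoping does not actually collapse to a single joint entropy, because the observations $y_r^n$ differ across the terms; you would need to enlarge each receiver's observation to make the chain degraded, which is precisely the paper's Theorem~\ref{thm:main}. But this is a repairable detail.

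The achievability, however, has a genuine gap. With a \emph{single} beamformer $\mathbf{p}_r$ per user $r<M$, the stream $\tilde u_r$ must carry DoF~$1$ at receiver $r$, i.e.\ its integer constellation has size $Q\asymp P^{1/2}$. That very same stream then arrives at receiver $M$ (in any state $s$) as $(\mathbf{h}_M^{\{s\}\dagger}\mathbf{p}_r)\tilde u_r$, still at DoF~$1$. There is nothing left to ``align'': a single scalar stream cannot be compressed at one receiver and not at another by choosing lattices, and even if the $M-1$ interferers collapsed onto a common sub-lattice their union would still occupy DoF~$1$, leaving DoF~$0$ for $\tilde w$. Equivalently, at receiver $M$ you face $M$ integer streams with generic coefficients; Khintchine--Groshev gives each at most $\tfrac{1}{M}$ DoF, which contradicts $d_r=1$ for $r<M$.

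The paper's fix is the idea you are missing: for each $r<M$ it uses \emph{$M$ beamformers} $\mathbf{v}_1^{[r]},\ldots,\mathbf{v}_M^{[r]}$ (all in the two-dimensional null space you identified), splits $W_r$ into $M$ pieces, and designs the modulation pseudo-vectors as monomials in the effective gains $g_{ri}^{\{s\}}=(\mathbf{h}_M^{\{s\}})^\dagger\mathbf{v}_i^{[r]}$. At receiver $r$ the $M$ pieces arrive with distinct coefficients and sum to DoF~$1$; at receiver $M$ the $M$ pieces of user $r$ align with one another (for every state $s$) and together occupy only DoF~$\tfrac{1}{M}$. The alignment is \emph{within} each interfering user's own sub-streams, not across users. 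Without this splitting-and-self-aligning step, the scheme cannot reach $M-1+\tfrac{1}{M}$.
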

We note that in this case, the base station has no uncertainty about the channels of receivers 1 to $M-1$. Here, we will show that the system benefits from both cooperation among transmitters and interference alignment. In the achievable scheme, we use zero-forcing precoder such that receivers 1 to $M-1$ observe no interference. However, receiver $M$ experiences interference from all the messages, sent to the other receivers. At receiver $M$, we use interface alignment to open up space for the message of this receiver.

\begin{thm}\label{thm:x}
In the compound X channel with $M$ transmitters and $K$ receivers and finite channel realizations, the optimal DoF is  $\frac{MK}{M+K-1}$, almost surely.
\end{thm}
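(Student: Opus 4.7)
The plan is to obtain Theorem~\ref{thm:x} essentially for free from the material already developed for the broadcast setting, by splitting the argument into a converse piece and an achievability piece.

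For the converse, I would exploit the simple observation that any code for the compound X channel is in particular a valid code for any one of its realizations. Fixing any tuple $(s_1,\ldots,s_K)$ of channel states reduces the problem to the ordinary (non-compound) $M$-transmitter, $K$-receiver X channel of \cite{cadambe2008dfw}, whose sum-DoF is upper bounded by $\frac{MK}{M+K-1}$ almost surely. The compound sum-capacity is dominated by the sum-capacity of that single realization, so the same outer bound descends to the compound channel.

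For the achievability, my plan is simply to re-use the signaling scheme constructed for Theorem~\ref{thm:ach1}. The crucial observation, emphasized already in that proof sketch, is that the $K$ broadcast messages are split into $MK$ sub-messages, each of which is emitted independently by a single transmit antenna, with \emph{no} joint encoding across antennas. Reading transmit antenna $m$ as transmitter $m$ of the X channel, and the sub-message destined for receiver $r$ that emanates from antenna $m$ as the independent X-channel message $W_{mr}$, converts the broadcast strategy verbatim into a legal X-channel strategy carrying $MK$ independent messages at the same per-message rates. Since the alignment in Theorem~\ref{thm:ach1} was engineered to hold simultaneously at every one of the $\prod_r J_r$ joint channel realizations, and since each receiver already decodes its $M$ sub-messages separately, the relabelling preserves both achievability and robustness across the compound class.

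The hard part is therefore not Theorem~\ref{thm:x} itself but rather Theorem~\ref{thm:ach1}: constructing, via the Khintchine--Groshev / Diophantine-approximation machinery of \cite{abolfazl_k_2} (extended here to complex coefficients), a single rational-beamforming structure whose alignment and decodability at every receiver are simultaneously robust against all finitely many channel realizations. Once that is in hand, Theorem~\ref{thm:x} follows as an immediate corollary: achievability from the no-cooperation feature of the Theorem~\ref{thm:ach1} scheme, converse from a single-realization bound against the X-channel DoF of \cite{cadambe2008dfw}.
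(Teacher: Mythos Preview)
Your proposal is correct and matches the paper's own argument essentially verbatim: achievability is inherited from the Theorem~\ref{thm:ach1} scheme because it uses no cooperation across transmit antennas, and the converse comes from fixing a single realization and invoking the $\frac{MK}{M+K-1}$ outer bound of~\cite{cadambe2008dfw}. There is nothing to add.
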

Note that in the achievable scheme of Theorem~\ref{thm:ach1}, we treat the broadcast channel as an X channel,  and achieve the optimal DoF of the corresponding $X$ channel.  Therefore, the achievable scheme for Theorem~\ref{thm:ach1} is indeed the achievable scheme for the corresponding compound X channel. For the converse, we use the result of~\cite{cadambe2008dfw}. It is interesting to note that the uncertainty of the transmitters about the channel realization in the compound X channel does not affect the DoF of the channel. Similar result can be proven for the compound interference channels. The reason is that the uncertainty about the channel realizations sacrifices the gain of cooperation, while in both X channels and interference channels, there is no possibility of cooperation from the beginning.

\textbf{Remark:} In the next sections, we first present the proof for the real channels. However, in Section~\ref{sec:complex}, we show
 that the machinery that we use for real channels can be extended to complex channels as well. Therefore, for example the complex DoF of a complex compound X channel with $M$ transmitters and $K$ receivers is $\frac{MK}{M+K-1}$.

\section{ Interference Alignment Approach: Achievable Scheme for Theorem~\ref{thm:ach1}}\label{sec:scheme1}
In this section, we prove Theorem~\ref{thm:ach1} and  explain the scheme which achieves the DoF of $\frac{MK}{M+K-1}$, almost surely.
For the sake of simplicity, we first elaborate the scheme for the case where the base station has two transmit antennas (M=2) and there are two users $K=2$.

Assume that the base station has message $W_1$ for receiver $1$ and $W_2$ for receiver two. In this approach, $W_1$ is divided  into two independent parts $W_{11}$ and $W_{12}$, i.e. $W_1=(W_{11}, W_{12})$. $W_{11}$ will be sent through transmitter one and $W_{12}$ will be sent through transmitter two. Similarly, $W_2$ is divided into two parts $W_2=(W_{21}, W_{22})$, where $W_{21}$ will be sent through transmitter one and $W_{22}$ will be sent through transmitter two. The transmission scheme is such that the contributions of $W_{11}$ and  $W_{12}$ are almost aligned at receiver two. Note that $W_{11}$ and  $W_{12}$ are not required at the second receiver.  Similarly, the contributions of $W_{21}$ and  $W_{22}$ are aligned at the first receiver.
Therefore, receiver one observers the contributions of  $W_{11}$,  $W_{12}$, and aligned contribution of  $W_{21}$ and  $W_{22}$. Each part takes $\frac{1}{3}$ of the DoF at receiver one. Therefore, the favorite messages $W_{11}$ and  $W_{12}$ take $\frac{2}{3}$ of the DoF, while $\frac{1}{3}$ of the resource is occupied and wasted by the aligned contribution of $W_{21}$ and  $W_{22}$. Similarly, at receiver two, the messages $W_{21}$ and  $W_{22}$ take $\frac{2}{3}$ of the DoF, while $\frac{1}{3}$ of the space is occupied by the aligned contribution of $W_{11}$ and  $W_{12}$. Therefore, we can achieve the total DoF of $\frac{4}{3}$. It is important to note that each receiver has different realizations, and the alignment must hold for all realizations. To this end, $W_{rt}$, $r=1,2$, $t=1,2$, itself is divided into $L_{rt}$ independent  parts as $W_{rt}=(W^{(1)}_{rt}, W^{(2)}_{rt}, \ldots, W^{(L_{rt})}_{rt})$, where $L_{rt}$ is almost the same for all $r=1,2$, $t=1,2$ and equal to $L$. Therefore, transmitter $t$ sends $L$ data sub-streams to receiver $r$.
At receiver one, the contributions of $(W^{(1)}_{21}, W^{(2)}_{21}, \ldots, W^{(L)}_{21})$ and $(W^{(1)}_{22}, W^{(2)}_{22}, \ldots, W^{(L)}_{22})$ have to be aligned, but it does not matter, which sub-streams of the first sets and the second sets are aligned.
This property gives us the flexibility to align the interference terms for all channel realizations. In the proposed scheme, for different channel realizations of receiver one, the contributions of $(W^{(1)}_{21}, W^{(2)}_{21}, \ldots, W^{(L)}_{21})$ and $(W^{(1)}_{22}, W^{(2)}_{22}, \ldots, W^{(L)}_{22})$ are always aligned, in a way that almost each sub-stream of the first set is aligned with a unique sub-stream from the other set. However, the mapping will change from one channel realization to the other. Similar statement is valid for receiver two.

Now the question is that how to keep the favorite sub-streams separable from each other and from the interference ones. Note that in the multiple-antenna systems, we send each data stream in a direction such that at the multiple-antenna receiver, we can separate each data stream from the others. However, here we only have single-antenna receiver and many data sub-streams. The technique is as follows:  (i) data sub-streams are modulated over integer constellations, (ii) each data sub-stream is multiplied to a particular constant which is determined by the channel coefficients. We call these coefficients as {\em modulation pseudo-vectors}. It has been shown with this approach we can separate each data sub-stream from the others under some conditions~\cite{abolfazl_k_2}.

In what follows, we step-by-step explain the proposed signaling scheme.

\subsection{Encoding}
As mentioned, in this scheme $W_r$ is divided into two independent parts $(W_{r1}, W_{r2})$ and then $W_{rt}$ is divided into $L_{rt}$ parts
$W_{rt}=(W^{(1)}_{rt}, W^{(2)}_{rt}, \ldots, W^{(L_{rt})}_{rt})$. The sub-message $W_{rt}^{(l)}$, sent by transmitter $t$, intended for receiver $r$, is encoded into the sequence $(u_{rt}^{(l)}[1], u_{rt}^{(l)}[2], \ldots, u_{rt}^{(l)}[T])$, where $T$ is the length of the codeword, and $u_{rt}^{(l)}[m]$, $m=1, \ldots, T$,  belongs to the integer constellation $(-Q, Q)_{\mathds{Z}}$. The parameter $Q$ will be given later. The sequence $(u_{rt}^{(l)}[1], u_{rt}^{(l)}[2], \ldots, u_{rt}^{(l)}[T])$ is weighted by (multiplied to) a real number $\nu^{(l)}_{rt}$, which is called modulation pseudo-vector. Each transmitter sends a weighted linear combination of the corresponding codewords. More precisely, the transmit signal by transmitters one and two are given by,
\begin{eqnarray}
x_1[m]= \lambda  \sum_{l=1}^{L_{11}} \nu^{(l)}_{11} u^{(l)}_{11}[m] + \lambda \sum_{l=1}^{L_{21}} \nu^{(l)}_{21} u^{(l)}_{21}[m]  ,\\
x_2[m]= \lambda  \sum_{l=1}^{L_{12}} \nu^{(l)}_{21} u^{(l)}_{21}[m]+ \lambda \sum_{l=1}^{L_{22}} \nu^{(l)}_{22} u^{(l)}_{22}[m].
\end{eqnarray}
 The normalizing constant $\lambda$ is chosen such that the power constraint is satisfied.

As we will see later, the modulation pseudo-vectors have two roles in the signaling: (i) it allows each receiver to separate favorite data sub-streams from the interference ones. (ii) it enables us to align the interference sub-streams at each receiver and improve the achieved DoF.

\subsection{ Choosing the  Modulation Pseudo-Vectors }
%Let us define the sets $\mathcal{A}_r$ as the sets of all possible entries of the vectors $\mathbf{h}_{r}^{s}$, for $s=1,\ldots,J_r$, t.e.
%\begin{eqnarray}
%\mathcal{A}_r = \{ h_{rt}^{\{s\}}, t=1,\ldots,M, s=1,\ldots, J_r\}
%\end{eqnarray}
%For example, if receiver one has two channel realizations $\mathbf{h}_1^{1}$ and $\mathbf{h}_1^{2}$, then $\mathcal{A}_1$ includes four elements which are the upper and lower elements of $\mathbf{h}_1^{1}$ and $\mathbf{h}_1^{2}$. We note that in this example $|\mathcal{A}_1|$ is four almost surely. We denote  $|\mathcal{A}_r |$ by $\mu_{r}$.

Let us define the set $\mathcal{B}_r$, for $r=1,2$, as follows:
\begin{eqnarray}
\mathcal{B}_{r}=\left \{ \prod_{r'=1, r' \neq r}^{K} \prod_{s=1}^{J_{r'}} \prod_{t=1}^{M} (h_{r't}^{\{s\}} )^{ \alpha_{r't}^{\{s\}}  }, \quad   1 \leq \alpha_{r't}^{\{s\}} \leq n_r, \quad r'\neq r   \right \},
\end{eqnarray}
where $n_r$ is a constant number. We define $L_r$ as the $|\mathcal{B}_r|$. It is easy to see that $L_1=n_1^{2J_2}$ and $L_2=n_2^{2J_1}$.
We use $\mathcal{B}_{1}$ and $\mathcal{B}_{2}$ as the set of the modulation pseudo-vectors, such that
\begin{eqnarray}
\left \{ \nu^{(l)}_{11}, \quad l=1,\ldots, L_{11} \right \}  = \left \{ \nu^{(l)}_{12},  \quad l=1,\ldots, L_{12} \right \} = \mathcal{B}_{1},\\
\left \{ \nu^{(l)}_{21}, \quad l=1,\ldots, L_{21} \right \}  = \left \{ \nu^{(l)}_{22},  \quad l=1,\ldots, L_{22} \right \} = \mathcal{B}_{2}.
\end{eqnarray}
 Since$\nu^{(l)}_{rt} =\nu^{(l')}_{rt}$ for $l \neq l'$, we have $L_{11}=L_{12}=L_1=n_1^{2J_2}$ and $L_{21}=L_{22}=L_2=n_2^{2J_1}$.  Consider $L$ as a large enough integer. We  choose $n_1$ and $n_2$ as
\begin{eqnarray}
n_1= \lfloor  L^{\frac{1}{2J_2}} \rfloor, \\
n_2= \lfloor  L^{\frac{1}{2J_1}}  \rfloor.
\end{eqnarray}
With these choices of $n_1$ and $n_2$, $L_1$ and $L_2$ are relatively close to $L$. In fact, $L_r=L+o(L)$, for $r=1,2$.

\subsection{Received Signals and Interference Alignment}
Let us focus on receiver one, when the channel state is $\mathbf{h}_1^{\{\hat{s}\}}$,  $\hat{s} \in \{ 1,\ldots, J_1\}$. The received signal is given by,
\begin{align}
y_1^{\{ \hat{s} \}}[m]& = \mathbf{h}_1^{{\{ \hat{s} \}}^{\dagger}} \mathbf{x}[m]+z_1^{\{ \hat{s}\}}[m] =\\
& =  h_{11}^{\{ \hat{s} \}} x_1[m]+ h_{12}^{\{\hat{s}\}} x_2[m]+z_1^{\{ \hat{s} \}}[m] = \\
& =  \lambda h_{11}^{\{ \hat{s} \}} \left( \sum_{l=1}^{L_{1}} \nu^{(l)}_{11} u^{(l)}_{11}[m] + \sum_{l=1}^{L_{2}} \nu^{(l)}_{21} u^{(l)}_{21}[m] \right)
\\& + \lambda h_{12}^{\{ \hat{s} \}} \left( \sum_{l=1}^{L_{1}} \nu^{(l)}_{12} u^{(l)}_{12}[m] + \sum_{l=1}^{L_{2}} \nu^{(l)}_{22} u^{(l)}_{22}[m] \right) +z_1^{\{ \hat{s}\}}[m] \\
&
=  \lambda  \left( \sum_{l=1}^{L_{1}} h_{11}^{\{ \hat{s} \}}\nu^{(l)}_{11} u^{(l)}_{11}[m] +  \sum_{l=1}^{L_{1}} h_{12}^{\{ \hat{s} \}}\nu^{(l)}_{12} u^{(l)}_{12}[m]  \right)
\\& + \lambda  \left(
  \sum_{l=1}^{L_{2}} h_{11}^{\{ \hat{s} \}}\nu^{(l)}_{21} u^{(l)}_{21}[m]
 + \sum_{l=1}^{L_{2}}h_{12}^{\{ \hat{s} \}} \nu^{(l)}_{22} u^{(l)}_{22}[m] \right) +z_1^{\{ \hat{s}\}}[m].
\end{align}
Note that the first two summations in the RHS of the above equations convey information for receiver one, while the last two summations are just interference.

It is easy to see that,
\begin{eqnarray}
h_{11}^{\{ \hat{s} \}} \nu^{(l)}_{11} \in h_{11}^{\{ \hat{s} \}}. \mathcal{B}_1 = \left \{ h_{11}^{\{ \hat{s} \}} \prod_{s=1}^{J_{2}} \prod_{t=1}^{M} (h_{2t}^{\{s\}} )^{ \alpha_{2t}^{\{s\}}  }, \quad   1 \leq \alpha_{2t}^{\{s\}} \leq n_1  \right \},\\
h_{12}^{\{ \hat{s} \}} \nu^{(l)}_{12} \in h_{12}^{\{ \hat{s} \}}. \mathcal{B}_1 = \left \{ h_{12}^{\{ \hat{s} \}} \prod_{s=1}^{J_{2}} \prod_{t=1}^{M} (h_{2t}^{\{s\}} )^{ \alpha_{2t}^{\{s\}}  }, \quad   1 \leq \alpha_{2t}^{\{s\}} \leq n_1   \right \},\\
h_{11}^{\{ \hat{s} \}} \nu^{(l)}_{21} \in h_{11}^{\{ \hat{s} \}}. \mathcal{B}_2 = \left \{ h_{11}^{\{ \hat{s} \}} \prod_{s=1}^{J_{1}} \prod_{t=1}^{M} (h_{1t}^{\{s\}} )^{ \alpha_{1t}^{\{s\}}  }, \quad   1 \leq \alpha_{1t}^{\{s\}} \leq n_2   \right \},\\
h_{12}^{\{ \hat{s} \}} \nu^{(l)}_{22} \in h_{12}^{\{ \hat{s} \}}. \mathcal{B}_2 = \left \{ h_{12}^{\{ \hat{s} \}} \prod_{s=1}^{J_{1}} \prod_{t=1}^{M} (h_{1t}^{\{s\}} )^{ \alpha_{1t}^{\{s\}}  }, \quad   1 \leq \alpha_{1t}^{\{s\}} \leq n_2   \right \}.
\end{eqnarray}

Regarding the coefficients of the received signal, we observe three important properties:
\begin{itemize}
\item[(i)] Since $h_{11}^{\{ \hat{s} \}} \neq h_{21}^{\{ \hat{s} \}}$, almost surely, then
\begin{eqnarray}
h_{11}^{\{ \hat{s} \}}. \mathcal{B}_1 \cap h_{12}^{\{ \hat{s} \}}. \mathcal{B}_1 =\emptyset.
\end{eqnarray}
Therefore, $|h_{11}^{\{ \hat{s} \}}. \mathcal{B}_1 \cup h_{12}^{\{ \hat{s} \}}. \mathcal{B}_1| =2 L_1$.
This means that at receiver one, $2L_1$ favorite data sub-streams are received with distinct coefficients.

\item[(ii)] It is easy to see that
\begin{eqnarray}
(h_{11}^{\{ \hat{s} \}}. \mathcal{B}_1 \cup h_{12}^{\{ \hat{s} \}}. \mathcal{B}_1 ) \bigcap ( h_{11}^{\{ \hat{s} \}}. \mathcal{B}_2 \cup h_{12}^{\{ \hat{s} \}}. \mathcal{B}_2) = \emptyset.
\end{eqnarray}
This means that interference sub-streams are received at receiver one with coefficients which are different from the coefficients of the favorite sub-streams.

\item[(iii)] Now let us focus on the coefficients of interference sub-streams at receiver one. It is easy to see that
 \begin{eqnarray}
 | h_{11}^{\{ \hat{s} \}}. \mathcal{B}_2 \cup h_{12}^{\{ \hat{s} \}}. \mathcal{B}_2 |=(n_2)^{2J_1-2}(n_2+1)^2.
 \end{eqnarray}
 Remember that $| h_{11}^{\{ \hat{s} \}}. \mathcal{B}_2| = |h_{12}^{\{ \hat{s} \}}. \mathcal{B}_2| =(n_2)^{2J_1}$. This means that $|h_{11}^{\{ \hat{s} \}}. \mathcal{B}_2 \cup h_{12}^{\{ \hat{s} \}}. \mathcal{B}_2 |$ has almost the same cardinality as $| h_{11}^{\{ \hat{s} \}}. \mathcal{B}_2|$ and $|h_{12}^{\{ \hat{s} \}}. \mathcal{B}_2|$. It
 implies that the set $ h_{11}^{\{ \hat{s} \}}. \mathcal{B}_2$ and the set $ h_{12}^{\{ \hat{s} \}}. \mathcal{B}_2$ are almost the same  with just few different elements (compared to the size of each set). In other words, the sub-streams, sent by transmitters one and two, intended for receiver two, are received at receiver one with the same coefficients. In fact, this property results in the alignment of  interference. Note that this property holds for all channel realizations.
\end{itemize}

At receiver one, we merge the interference sub-streams with the similar coefficients, so we have
\begin{eqnarray}
y_1^{\{ \hat{s} \}}[m]&& =  \lambda  \left( \sum_{l=1}^{L_{1}} h_{11}^{\{ \hat{s} \}}\nu^{(l)}_{11} u^{(l)}_{11}[m] +  \sum_{l=1}^{L_{1}} h_{12}^{\{ \hat{s} \}}\nu^{(l)}_{12} u^{(l)}_{12}[m] +
 \sum_{l=1}^{\kappa_2} \hat{\nu}^{(l)}_{1,\hat{s}} \bar{u}^{(l)}_{1,\hat{s}}[m]
 \right) +z_1^{\{ \hat{s}\}}[m],
\end{eqnarray}
where
$\kappa_2 = n_2^{2J_1-2}(n_2+1)^2$, and $\hat{\nu}^{(l)}_{1,\hat{s}} \in h_{11}^{\{ \hat{s} \}}. \mathcal{B}_2 \cup h_{12}^{\{ \hat{s} \}}. \mathcal{B}_2$. In addition, $\bar{u}^{(l)}_{1,\hat{s}}[m] \in (-2Q,2Q)_{\mathds{Z}}$. Therefore, we have a noisy version of the integer combination of $\kappa_2+2L_1$ real numbers. It is important to note that  these numbers are monomial functions of the channel coefficients, where these functions are linearly independent.
Note that the fraction of $\frac{2L_1}{\kappa_2+2L_1}$ of the arrived data sub-streams with different coefficients are favorite sub-streams. Since $\kappa_2= L+o(L)$ and $L_1=L+o(L)$, then $\frac{2L_1}{\kappa_2+2L_1} \simeq \frac{2}{3}$.

Similarly, at receiver two, where the channel state is $\mathbf{h}_2^{\{\hat{s}\}}$, where $\hat{s} \in \{ 1,\ldots, J_2\}$, we have
\begin{eqnarray}
y_2^{\{ \hat{s} \}}[m]=  \lambda  \left( \sum_{l=1}^{L_{2}} h_{21}^{\{ \hat{s} \}}\nu^{(l)}_{21} u^{(2l)}_{l}[m] +  \sum_{l=1}^{L_{2}} h_{22}^{\{ \hat{s} \}}\nu^{(l)}_{22} u^{(l)}_{22}[m] + \sum_{l=1}^{\kappa_1} \hat{\nu}^{(l)}_{2,\hat{s}} \bar{u}^{(l)}_{2,\hat{s}}[m]
 \right) +z_2^{\{ \hat{s}\}}[m],
\end{eqnarray}
where $\kappa_1 = n_1^{2J_2-2}(n_1+1)^2$ and $\hat{\nu}^{(l)}_{2,\hat{s}} \in h_{21}^{\{ \hat{s} \}}. \mathcal{B}_1 \cup h_{22}^{\{ \hat{s} \}}. \mathcal{B}_1$. In addition, $\bar{u}^{(l)}_{2,\hat{s}}[m] \in (-2Q,2Q)_{\mathds{Z}}$. Therefore, we have a noisy version of the integer combination of $\kappa_1+2L_2$ real numbers. Again, it is important to note that these numbers are monomial functions of the channel coefficients where these functions are linearly independent.  Again $\kappa_1= L+o(L)$ and $L_2=L+o(L)$, then $\frac{2L_2}{\kappa_1+2L_2} \simeq \frac{2}{3}$ of the separable data sub-streams convey favorite messages.

Note that at each receiver, the total available DoF is just one. Here we try to develop the signaling scheme such that each data sub-stream has DoF of $\frac{1}{\xi}$ DoF, where
\begin{eqnarray}
\xi= \max\{\kappa_2+2L_1 , \kappa_1+2L_2\}.
\end{eqnarray}
Therefore, at receiver $1$, we have $\frac{2L_1}{\xi}$ portion of the available DoF is used for receiving the favorite data sub-streams, while   $\frac{\kappa_2}{\xi}$ is wasted for interference. It is easy to see that $\frac{2L_1}{\xi}$ is almost $\frac{2}{3}$, while $\frac{\kappa_2}{\xi}$ is almost $\frac{1}{3}$. Similar statement is valid for the second receiver.

In the next subsection, we design the parameters of the signaling scheme.

\subsection{Choosing $Q$ and $\lambda$}
Now we choose $Q$ as follows:
\begin{eqnarray}
Q = (\frac{P}{2})^{\frac{1-\epsilon}{2(\xi + \epsilon) }},
\end{eqnarray}
where $\epsilon$ is an arbitrary small constant. Note that $u^{(l)}_{rt}[m]$ is from the integer constellation in  $(-Q,Q)$, where the rate of this constellation is $\log_2 (2Q)= \frac{1-\epsilon}{2(\xi + \epsilon)} \log_2(\frac{P}{2})+ 1$.

In addition, we have,
\begin{align}
\mathbb{E} [x^2_t[m]]&=\lambda^2 \left(\sum_{l=1}^{L_{1}} \left(\nu^{(l)}_{1t}\right)^2 \mathbb{E} [ (u^{(l)}_{1t}[m])^2 ]+ \sum_{l=1}^{L_{2}} \left(\nu^{(l)}_{2t} \right)^2 \mathbb{E} [ (u^{(l)}_{2t}[m])^2] \right) \\
&\leq \lambda^2 Q^2 \left(\sum_{l=1}^{L_{1}} \left(\nu^{(l)}_{1t}\right)^2 +  \sum_{l=1}^{L_{2}} \left(\nu^{(l)}_{2t} \right)^2  \right),
\end{align}
where we use the independency of the data sub-streams.
Wee choose $\lambda$ such that
\begin{eqnarray}
\mathbb{E} [x^2_t[m]] \leq 0.5P.
\end{eqnarray}
One choice for $\lambda$ is
\begin{eqnarray}
\lambda =\frac{(.5P)^{\frac{1}{2}}}{\Gamma Q},
\end{eqnarray}
where
\begin{eqnarray}
\Gamma^2= \sum_{l=1}^{L_{1}} \left(\nu^{(l)}_{11}\right)^2 +  \sum_{l=1}^{L_{2}} \left(\nu^{(l)}_{21} \right)^2 =
\sum_{\nu \in \mathcal{B}_1} \nu^2 + \sum_{\nu \in \mathcal{B}_2} \nu^2  .
\end{eqnarray}
Then
\begin{eqnarray}
\lambda= \frac{1}{\Gamma} (.5P)^{ \frac{\xi-1+2\epsilon}{2(\xi+\epsilon)}}.
\end{eqnarray}

\subsection{Constellation Formed At Each Receiver}
At receiver one, when the channel state is $\mathbf{h}_1^{\{\hat{s}\}}$,  $\hat{s} \in \{ 1,\ldots, J_1\}$,  the received signal at each time $m$ is a noisy version of a point from the constellation $\mathcal{C}_1$, where
\begin{eqnarray*}
\mathcal{C}_1 = \left\{  \lambda  \left( \sum_{l=1}^{L_{1}} h_{11}^{\{ \hat{s} \}}\nu^{(l)}_{11} u^{(l)}_{11} +  \sum_{l=1}^{L_{1}} h_{12}^{\{ \hat{s} \}}\nu^{(l)}_{12} u^{(l)}_{12} +
 \sum_{l=1}^{\kappa_1} \hat{\nu}^{(l)}_{1,\hat{s}} \bar{u}^{(l)}_{1,\hat{s}}
 \right), \
u^{(l)}_{11}, u^{(l)}_{21} \in (-Q,Q)_{\mathds{Z}}, \ \bar{u}^{(l)}_{1,\hat{s}} \in (-2Q,2Q)_{\mathds{Z}}
\right\}.
\end{eqnarray*}
Using the Theorem 4 of~\cite{abolfazl_k_2}, we can show that the minimum distance of this constellation is $\eta \left( \frac{P}{2} \right)^{\epsilon}$, almost surely, where $\eta$ is a constant, independent of $P$.

This means that
\begin{itemize}
\item[(i)] There is a one to one mapping between $\left(\left\{ u^{(l)}_{1t} \right\}_{\begin{smallmatrix} t=1,\ldots,2 \\ l=1,\ldots,L_1 \end{smallmatrix}},  \left\{\bar{u}^{(l)}_{1,\hat{s}}\right\}_{\begin{smallmatrix}  l=1,\ldots,\kappa_{2} \end{smallmatrix} } \right)$ and the points of the constellation $\mathcal{C}_1$.
\item[(ii)] In high power, we can de-noise the received signal, the detect the point of the constellation $\mathcal{C}_r$ with vanishing probability of error, find the unique corresponding  $\left(\left\{ u^{(l)}_{1t} \right\}_{\begin{smallmatrix} t=1,\ldots,2 \\ l=1,\ldots,L_1 \end{smallmatrix}},  \left\{\bar{u}^{(l)}_{1,\hat{s}}\right\}_{\begin{smallmatrix}  l=1,\ldots,\kappa_{2} \end{smallmatrix} } \right)$.
\end{itemize}

The same statement is true for receiver two. Therefore, at receiver $r$ and at each time $m$,  we apply hard detection to find $\left(\left\{ \hat{u}^{(l)}_{rt}[m] \right\}_{\begin{smallmatrix} t=1,\ldots,2 \\ l=1,\ldots,L_r \end{smallmatrix}},  \left\{\hat{\bar{u}}^{(l)}_{r,\hat{s}}[m]\right\}_{\begin{smallmatrix}  l=1,\ldots,\kappa_{\hat{r}}, \ \hat{r}\neq r\end{smallmatrix} } \right)$. Then, we pass the sequence $\left(\left\{ \hat{u}^{(l)}_{rt}[m] \right\}_{\begin{smallmatrix} t=1,\ldots,2 \\ l=1,\ldots,L_r \end{smallmatrix}},  \left\{\hat{\bar{u}}^{(l)}_{r,\hat{s}}[m]\right\}_{\begin{smallmatrix}  l=1,\ldots,\kappa_{\hat{r}}, \ \hat{r}\neq r\end{smallmatrix} } \right)$ to the decoder to decode $\hat{W}_{rt}^{l}$, for $l=1,\ldots, L_r$ and $t=1,2$.

\subsection{Performance Analysis}
As mentioned, at receiver $r$ and at each time, we use hard detection to detect \\
$\left(\left\{ \hat{u}^{(l)}_{rt}[m] \right\}_{\begin{smallmatrix} t=1,\ldots,2 \\ l=1,\ldots,L_r \end{smallmatrix}},  \left\{\hat{\bar{u}}^{(l)}_{r,\hat{s}}[m]\right\}_{\begin{smallmatrix}  l=1,\ldots,\kappa_{\hat{r}}, \ \hat{r}\neq r\end{smallmatrix} } \right)$. Probability of error, $P_e$ of this detection is upper-bounded by
$P_e \leq Q\left( \frac{d_{min}}{2} \right) =  Q\left( \eta \left(\frac{P}{2} \right)^{\epsilon} \right) $, and therefore, $P_e \rightarrow 0$ as $P \rightarrow \infty$. Then, using the fact that $u^{(l)}_{rt}[m]$ is from the integer constellation in  $(-Q,Q)$ with rate $\log_2 (2Q)= \frac{1-\epsilon}{2(\xi + \epsilon)} \log_2(\frac{P}{2})+ 1$, we can show that each of the data sub-streams $u^{(l)}_{r1}[m]$ and $u^{(l)}_{r2}[m]$ achieves the DoF of $\frac{1-\epsilon}{\xi + \epsilon}$. Therefore, we achieve the DoF of $2L_1 \frac{1-\epsilon}{\xi + \epsilon}$ at receiver one and $2L_2 \frac{1-\epsilon}{\xi + \epsilon}$ at receiver two. Therefore, this scheme achieves the total DoF of
\begin{eqnarray}
2(L_1+L_2)\frac{1-\epsilon}{\xi + \epsilon}.
\end{eqnarray}
Since $L_1$ and $L_2$ are in the order of $L+o(L)$, and $\xi =3L+o(L)$,  by choosing large enough $L$ and small enough $\epsilon$, this scheme archives the DoF arbitrary close to $\frac{4}{3}$.

We can apply the same approach for the general compound broadcast channel~\eqref{eq:channel_model}, and achieve the DoF of $\frac{MK}{M+K-1}$. Refer to Appendix~\ref{app:1} for details.

\textbf{Remark:} Note that in the achievable scheme, we assume no cooperation among transmitters. This means that we treat the channel as a compound X channel. Therefore, the achievable scheme of Theorem~\ref{thm:ach1} is an achievable scheme for the corresponding compound X channel. Therefore, for the compound X channel, we achieve the DoF of $\frac{MK}{M+K-1}$.  The converse of Theorem~\ref{thm:x} is proven by using the upper-bound of the DoF of the single-state X network given in~\cite{cadambe2008dfw}. We can use similar argument to prove that the DoF of the finite-state compound K-user interference channels is $\frac{K}{2}$, almost surely.

\section{Combination of Transmit Cooperation and Interference Alignment: Achievable Scheme for Theorem~\ref{thm:ach3} }~\label{sec:scheme2}
In this section, we focus on the case, where $K=M$, $J_r=1$, for $r=1,\ldots, M-1$, and $J_M \geq M$. In this case, the channel model is simplified to,
\begin{align}\label{eq:channel_mode2}
y_r[m]=&\mathbf{h}_r^{\dagger} \mathbf{x}[m]+z_r[m], \quad r=1,\ldots,M-1,\\
y_M^{\{s\}}[m]=&\mathbf{h}_M^{{\{s\}}^{\dagger}} \mathbf{x}[m]+z_M^{\{s\}}[m],  \quad s=1,\ldots,J_M.
\end{align}
Apparently, by using the scheme presented in the previous section, the DoF of $\frac{M^2}{2M-1}$ is achievable. Remember that in the scheme of Section \ref{sec:scheme1}, the possibility of cooperation among the transmitters is simply ignored. Indeed, we will show that for some cases where the uncertainty of the transmitter about the channel states is considerable, the approach of Section \ref{sec:scheme1} is optimal and therefore ignoring the possibility of cooperation does not affect the achievable DoF. However, in the channel \eqref{eq:channel_mode2}, the knowledge of the base station about the channel states is considerable. In fact, the base station knows the perfect channel state for receivers $1$ to $M-1$. This knowledge allows us to improve the DoF by exploiting the possibility of the cooperation among transmitters.

 In this section, we propose a signaling scheme which is based on a combination of both zero-forcing and interference alignment. The proposed  scheme achieves the DoF of $M-1+\frac{1}{M}$. More precisely,  receiver $r$, $1 \leq r \leq M-1$, achieves the  DoF of one, while the last receiver achieves the DoF of $\frac{1}{M}$. Indeed, we develop a special form of zero-forcing precoder,  such that receivers $1$ to $M-1$ do not experience any interference. In contrary, receiver $M$ observes interference from the data sent to all other receivers. However, we use interference alignment to reduce the constructive effect of the interference and guarantee the DoF of $\frac{1}{M}$ for receiver $M$.

The outline of the alignment scheme is as follows. Let $W_r$ be the message for receiver $r$. Here, message $W_r$,  for $r=1,\ldots,M-1$, is divided into $M$ independent messages, as $W_r=(W_{r1},\ldots,W_{rM})$. In the proposed scheme, the contributions of  $W_{r1},\ldots,W_{rM}$ at receiver $M$ are aligned and occupy $\frac{1}{M}$ of the available DoF. Therefore in total, $\frac{M-1}{M}$ of the available DoF at receiver $M$ is occupied by interference, and the rest is used to receive the favorite message $W_M$.

Here, we elaborate the proposed scheme step by step.

\subsection{Zero-Forcing Precoders}
For receiver $r$, the base station uses the precoding matrix $\mathbf{V}^{[r]}=[\mathbf{v}_1^{[r]},\ldots, \mathbf{v}_M^{[r]} ] \in \mathds{R}^{M\times M} $. The columns of $\mathbf{V}^{[r]}$ are selected randomly from the subspace, which is orthogonal to $\textrm{Span}\{ \mathbf{h}_{\hat{r}} , \hat{r}=1,\ldots, M-1, \hat{r} \neq r \}$. Therefore $ \mathbf{v}^{[r]}_{i} \bot \mathbf{h}_{\hat{r}}$, for $r\neq \hat{r}$.

For receiver $M$, we choose the precoding vector $\mathbf{v}^{[M]} \in \mathds{R}^{M\times 1}$, orthogonal to $\textrm{Span}[\mathbf{h}_1,\ldots,\mathbf{h}_{M-1}]$. These precoding matrices guarantee interference-free signals for receivers $1$ to $M-1$.

\subsection{Encoding}
We define $g_{ri}^{\{s\}}$ for  $s=1,\ldots,J_M$, $r=1,\ldots,M-1$, and $i=1,\ldots,M$, as
\begin{eqnarray}\label{eq:g}
g_{ri}^{\{s\}}=(\mathbf{h}_M^{\{s\}})^{\dagger} \mathbf{v}_i^{[r]}.
\end{eqnarray}
Later we use $g_{ri}^{\{s\}}$ to construct modulation pseudo-vectors.

As mentioned, the message for receiver $r$, $1 \leq r \leq M-1$, is deviled into $M$ independent parts, i.e. $W_r=(W_{r1},\ldots,W_{rM})$.
Then, $W_{ri}$ itself is divided into $L_r$ independent parts, $W_{ri}=(W_{ri}^{(1)}, W_{ri}^{(2)},\ldots, W_{ri}^{(L_r)} )$.
$W^{(l)}_{ri}$ is coded into the sequence $(u_{ri}^{(l)}[1], u_{ri}^{(l)}[2], \ldots, u_{ri}^{(l)}[T])$, where $u_{ri}^{(l)}[m]$, $m=1, \ldots, T$,  belongs to the integer constellation $(-Q, Q)_{\mathds{Z}}$.  The string $(u_{ri}^{(l)}[1], u_{ri}^{(l)}[2], \ldots, u_{ri}^{(l)}[T])$ is weighted by the modulation pseudo-vector $\nu^{(l)}_{ri}$. We form the weighted linear combination of the corresponding codewords as,
\begin{eqnarray}
\omega^{[r]}_i[m]= \lambda  \sum_{l=1}^{L_{r}} \nu^{(l)}_{ri} u^{(l)}_{ri}[m].
\end{eqnarray}
Then, we define the vector $\boldsymbol{w}^{[r]}[m]$, as
\begin{eqnarray}
 \boldsymbol{w}^{[r]}[m]=[\omega^{[r]}_1[m],\omega^{[r]}_2[m],\ldots,\omega^{[r]}_{M}[m]]^{\dagger},\ r=1,\ldots,M-1.
\end{eqnarray}

The encoding approach for message $W_M$ is different. Message  $W_M$  is divided into $L_M$ independent parts, as $W_{M}=(W_{M}^{(1)}, W_{M}^{(2)},\ldots, W_{M}^{(L_M)} )$.

Then, $W^{(l)}_{M}$ is coded into the sequence
$(u_{M}^{(l)}[1], u_{M}^{(l)}[2], \ldots, u_{M}^{(l)}[T])$, where again $u_{M}^{(l)}[m]$, $m=1, \ldots, T$,  belongs to the integer constellation $(-Q, Q)_{\mathds{Z}}$.  Then we form the following weighted linear combination using  the modulation pseudo-vectors $\nu^{(l)}_{M}$.
\begin{eqnarray}
\omega^{[M]}[m]= \lambda   \sum_{l=1}^{L_M} \nu^{(l)}_{M} u^{(l)}_{M}[m].
\end{eqnarray}
The transmit vector $\mathbf{x}[m]$ is equal to
\begin{eqnarray}
\mathbf{x}[m]=\sum_{r=1}^{M-1} \mathbf{V}^{[r]} \boldsymbol{w}^{[r]}[m]+\mathbf{v}^{[M]}\omega^{[M]}[m].
\end{eqnarray}

\subsection{Modulation Pseudo-Vectors for User $r$, $1\leq  r \leq M-1$}

Let us define the set $\mathcal{B}_r$, for $r=1,\ldots, M-1$, as follows:
\begin{eqnarray}
\mathcal{B}_{r}=\left \{  \prod_{s=1}^{J_{M}} \prod_{i=1}^{M} (g_{ri}^{\{s\}} )^{ \alpha_{ri}^{\{s\}}  }, \quad   1 \leq \alpha_{ri}^{\{s\}} \leq n_r, \right \},
\end{eqnarray}
where $g_{ri}^{\{s\}}$ is defined in \eqref{eq:g} and  $n_r$ is a constant number.

We use $\mathcal{B}_{r}$  as the set of the modulation pseudo-vectors for the data sub-streams intended for receiver $r$,  i.e.,
\begin{eqnarray}
\left \{ \nu^{(l)}_{rt}, \quad l=1,\ldots, L_{r} \right \}  = \mathcal{B}_{r}.
\end{eqnarray}
Note that $\nu^{(l)}_{rt} =\nu^{(l')}_{rt}$, if $l \neq l'$. Therefore,  $L_r$ is equal to $|\mathcal{B}_r|$. We choose $n_1=n_2=\ldots=n_{M-1}=n$, and therefore $L_r=L$, where $L=n^{MJ_M}$.

\subsection{Modulation Pseudo-Vectors for User $M$}
Let $\beta$ be a randomly-chosen real number. we choose $\nu^{(l)}_{M}$ as
\begin{eqnarray}
\nu^{(l)}_{M}=\beta^{l}, \ l=1,\ldots,L_M.
\end{eqnarray}
Set $L_M=L=n^{MJ_M}$.

\subsection{Received Signal at Receiver $r$, $1\leq r \leq M-1$}
It is easy to see that,
\begin{align}
y_r[m]& = \mathbf{h}_r^{\dagger} \mathbf{x}[m]+z_r[m] \\
&=\mathbf{h}_r^{\dagger} \left(\sum_{\hat{r}=1}^{M-1} \mathbf{V}^{[\hat{r}]} \boldsymbol{\omega}^{[\hat{r}]}[m]+\mathbf{v}^{[M]}\omega^{[M]}[m] \right)+z_r[m]\\
& = \mathbf{h}_r^{\dagger} \mathbf{V}^{[r]} \boldsymbol{w}^{[r]}[m]+z_r[m] \\
& = \sum_{i=1}^{M}\mathbf{h}_r^{\dagger} \mathbf{v}_i^{[r]} {\omega}^{[r]}[m]+z_r[m] \\
&= \lambda  \sum_{i=1}^{M}   \sum_{l=1}^{L}  \mathbf{h}_r^{\dagger} \mathbf{v}_i^{[r]}\nu^{(l)}_{ri} u^{(l)}_{ri}[m]+z_r[m],
\end{align}
where we use the fact that  $\mathbf{h}_r$ is orthogonal to the columns of $\mathbf{V}^{[\hat{r}]}$, where $r\neq \hat{r}$, and also $\mathbf{h}_r^{\dagger} \bot \mathbf{v}^{[M]}$.

Note that receiver $r$ does not observe any interference. In addition, it is easy to see that  $\mathbf{h}_r^{\dagger}\mathbf{v}_i^{[r]}\nu^{(l)}_{ri}$ are monomial functions of $\mathbf{h}_r^{\dagger}\mathbf{v}_i^{[r]}$, and $g_{ri}^{(s)}$, for different $i$'s and $s$'s, where these monomial functions are linearly independent.
Therefore, $y_r[m]$ is a noisy version of an integer combination of $ML$ real numbers.

\subsection{Received Signal at Receiver $M$ with Channel $\mathbf{h}_{M}^{\{ \hat{s}\}}$}
The received signal at receiver $M$, where the channel state is $\mathbf{h}_{M}^{\{ \hat{s}\}}$ is given by,
\begin{align}
y_M^{\{ \hat{s} \}}[m]& = \mathbf{h}_M^{{\{ \hat{s} \}}^{\dagger}} \mathbf{x}[m]+z_M^{\{ \hat{s}\}}[m] =\\
& =  \mathbf{h}_M^{{\{ \hat{s} \}}^{\dagger}} \left(\sum_{r=1}^{M-1} \mathbf{V}^{[r]} \boldsymbol{\omega}^{[r]}[m]+\mathbf{v}^{[M]}\omega^{[M]}[m] \right)+z_M^{\{ \hat{s}\}}[m]\\
& = \sum_{r=1}^{M-1}\sum_{i=1}^{M}\mathbf{h}_M^{{\{ \hat{s} \}}^{\dagger}} \mathbf{v}_i^{[r]} {\omega}^{[r]}[m] + \mathbf{h}_M^{{\{ \hat{s} \}}^{\dagger}} \mathbf{v}^{[M]}\omega^{[M]}[m] +z_M^{\{ \hat{s}\}}[m]\\
& = \sum_{r=1}^{M-1}\sum_{i=1}^{M} g_{ri}^{\{ \hat{s} \}}  {\omega}^{[r]}[m] + \mathbf{h}_M^{{\{ \hat{s} \}}^{\dagger}} \mathbf{v}^{[M]}\omega^{[M]}[m] +z_M^{\{ \hat{s}\}}[m]\\
& =
\lambda \sum_{r=1}^{M-1}\sum_{i=1}^{M} \sum_{l=1}^{L} g_{ri}^{\{ \hat{s} \}}\nu^{(l)}_{ri} u^{(l)}_{ri}[m] +  \lambda \sum_{l=1}^{L} \mathbf{h}_M^{{\{ \hat{s} \}}^{\dagger}} \mathbf{v}^{[M]} \nu^{(l)}_{M} u^{(l)}_{M}[m] +z_M^{\{ \hat{s}\}}[m],
\end{align}
where we use the definition $g_{ri}^{\{ \hat{s} \}}$ in \eqref{eq:g}.

It is easy to see that
\begin{eqnarray}
|\cup_{i=1}^{M} g_{ri}^{\{ \hat{s} \}} . \mathcal{B}_r| =\kappa,
\end{eqnarray}
where $\kappa= n^{M(J_M-1)}(n+1)^M$. This means that
 $|\cup_{i=1}^{M} g_{ri}^{\{ \hat{s} \}} . \mathcal{B}_r|$ is almost the same as $| g_{ri}^{\{ \hat{s} \}} . \mathcal{B}_r|=n^{MJ_L}$. It implies that
 the sets $ g_{ri}^{\{ \hat{s} \}} . \mathcal{B}_r$, $i=1,\ldots,M$,  are almost the same. Therefore, the data sub-streams, carrying $W_{ri}$ to $W_{rM}$,  are arrived with similar real coefficients. This property implies that the contributions of $W_{r1}$ to $W_{rM}$ are aligned at receiver $M$.  It is important to note that this property holds, irrespective to the channel state of the last receiver. Note that the favorite data-streams are received with the real coefficients which are distinct from those of the interference sub-streams.

Let us merge the the interference sub-streams with the same real coefficients. Therefore, we have
\begin{align}
y_M^{\{ \hat{s} \}}[m]=
\lambda \sum_{r=1}^{M-1} \sum_{l=1}^{\kappa} \hat{\nu}^{(l)}_{r,\hat{s}} \bar{u}^{(l)}_{r,\hat{s}}[m] +  \lambda \sum_{l=1}^{L} \mathbf{h}_M^{{\{ \hat{s} \}}^{\dagger}} \mathbf{v}^{[M]} \nu^{(l)}_{M} u^{(l)}_{M}[m] +z_M^{\{ \hat{s}\}}[m],
\end{align}
where $\hat{\nu}^{(l)}_{r,\hat{s}} \in \cup_{i=1}^{M} g_{ri}^{\{ \hat{s} \}} . \mathcal{B}_r$ and $\bar{u}^{(l)}_{r,\hat{s}}[m] \in (-MQ,MQ)$.

It is easy to see that in the above integer expansion, all the real coefficients are distinct. In addition, each coefficient is a monomial function of $g_{ri}^{\{ \hat{s} \}}$, $\mathbf{h}_M^{{\{ \hat{s} \}}^{\dagger}} \mathbf{v}^{[M]}$ and $\beta$, where these functions are linearly independent.

Let $\xi=(M-1)\kappa+L$, and then let
\begin{eqnarray}
Q = (P)^{\frac{1-\epsilon}{2(\xi + \epsilon) }},
\end{eqnarray}
where $\epsilon$ is an arbitrary small constant. Then, we have,
\begin{eqnarray}
\mathbb{E} [\| \mathbf{x}[m] \|^2]=  \sum_{r=1}^{M-1}\mathbb{E} [ (\boldsymbol{\omega}_r)^{\dagger} (\mathbf{V}^{[r]})^{\dagger} \mathbf{V}^{[r]}\boldsymbol{\omega}_r] + \mathbb{E} [ (\boldsymbol{\omega}_M)^{\dagger} (\mathbf{v}^{[M]})^{\dagger} \mathbf{v}^{[M]}\boldsymbol{\omega}_M ]
\leq Q^2\lambda^2 \Gamma^2
\end{eqnarray}
where
\begin{eqnarray}
\Gamma^2=\sum_{r=1}^{M-1}\sigma_{\max}^2( \mathbf{V}^{[r]}) \sum_{i=1}^{M}\sum_{l=1}^{L} (\nu_{ri}^{(l)})^2+  \|\mathbf{v}^{[M]}\|^2 \sum_{l=1}^{L} (\nu_{M}^{(l)})^2,
\end{eqnarray}
and $\sigma_{\max}( \mathbf{V}^{[r]})$ denotes the largest singular value of $\mathbf{V}^{[r]}$.
To satisfy the power constraint, we choose $\lambda$ as
\begin{eqnarray}
\lambda =\frac{(P)^{\frac{1}{2}}}{\Gamma Q}
\end{eqnarray}
or,
\begin{eqnarray}
\lambda= \frac{1}{\Gamma} P^{ \frac{\xi-1+2\epsilon}{2(\xi+\epsilon)}}.
\end{eqnarray}

Then, it is easy to see that all the conditions of Theorem 4 of~\cite{abolfazl_k_2} are satisfied. Therefore, receiver $M$ achieves the DoF of $L \frac{1-\epsilon}{\xi+\epsilon}$ and receiver $r$, $1 \leq r \leq M-1$, achieves the DoF of $ML \frac{1-\epsilon}{\xi+\epsilon}$.
Note that $L=n^{MJ_M}$ and $\xi=(M-1)n^{M(J_M-1)}(n+1)^M+n^{MJ_M}$, it is easy to see that $((M-1)ML+L) \frac{1-\epsilon}{\xi+\epsilon}$, can be arbitrary close to $M-1+\frac{1}{M}$, by choosing large enough $n$ and small enough $\epsilon$.

\section{Outer-Bounds: Converse for Theorems~\ref{thm:ach2} and ~\ref{thm:ach3}}
In this section, we prove the converse of Theorems~\ref{thm:ach2} and ~\ref{thm:ach3}. These results extend the outer-bounds presented in~\cite{Hannan_BCCM} following the same arguments.

First we need the following theorem.
\begin{thm}\label{thm:TD_EX}
Consider a broadcast channel $\Pr(Y_1,\ldots,Y_K|X)$ with $K$ receivers  with the degradedness property represented by $X \leftrightarrow Y_1 \leftrightarrow Y_2 \leftrightarrow ...\leftrightarrow Y_{s-1} \leftrightarrow (Y_{s},\ldots,Y_{K})$. Consider the messages
$W_1$, $W_2$,\ldots, $W_{s}$. Message $W_i$ is requires by receivers $Y_j$, $j=1,\ldots,i$, for $i=1, \ldots,s-1$, but message $W_{s}$ is required by all receivers. The rate of message $W_i$ is denoted by $R_i$. Then, the capacity region is the convex union of all the rates satisfying the following inequalities
\begin{align}
R_1 & \leq I(Y_1;X|U_1)\\
R_i & \leq I(Y_i;U_{i-1}|U_{i})\  \quad \ i=2,\ldots,s-1,\\
R_{s} &\leq I(Y_i;U_{s-1})\  \quad \ i=s,\ldots,K,
\end{align}
for some joint distributions \\
 $\Pr(u_{s-1},\ldots, u_{1},x,y_1,\ldots,y_K)=\Pr(u_{s-1}) \Pr(u_{s-2}|u_{s-1}) \ldots \Pr(u_2|u_1) \Pr(x|u_1)\Pr(y_1,\ldots,y_k|x)$.
\end{thm}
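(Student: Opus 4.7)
The plan is to adapt the standard Bergmans/superposition-coding argument for the degraded broadcast channel to a setting with $s$ nested message layers and a common message destined for the jointly-degraded tail $(Y_s,\ldots,Y_K)$.

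For achievability I would build an $s$-layer superposition codebook that mirrors the Markov chain $U_{s-1}\to U_{s-2}\to\cdots\to U_1\to X$. First generate $2^{nR_s}$ cloud centers $u_{s-1}^n(w_s)$ i.i.d.\ from $\Pr(u_{s-1})$; for each, generate $2^{nR_{s-1}}$ satellites $u_{s-2}^n(w_s,w_{s-1})$ i.i.d.\ from $\Pr(u_{s-2}|u_{s-1})$, and iterate down to $x^n(w_s,\ldots,w_1)$ drawn i.i.d.\ from $\Pr(x|u_1)$. Receiver $Y_i$ with $i\leq s-1$ jointly-typically decodes $u_{i-1}^n$, which recovers $(W_i,W_{i+1},\ldots,W_s)$; the chain rule $I(Y_i;U_{i-1})=\sum_{j\geq i}I(Y_i;U_{j-1}|U_j)$ together with a standard layer-by-layer error analysis yields exactly the rate constraints $R_i\leq I(Y_i;X|U_1)$ (for $i=1$) and $R_j\leq I(Y_i;U_{j-1}|U_j)$ for $j\geq i$. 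Because of the degradedness $Y_1\succeq Y_2\succeq\cdots\succeq Y_{s-1}$, the binding constraint for $W_j$ is imposed at receiver $Y_j$ itself; the stronger receivers automatically succeed. Finally, each receiver $Y_i$ with $i\geq s$ only needs $W_s$ and decodes the cloud center alone, giving $R_s\leq I(Y_i;U_{s-1})$ for each such $i$.

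For the converse I would apply Fano's inequality at each receiver that is responsible for a given message, and identify the single-letter auxiliaries in the now-classical way as
\begin{equation*}
U_{i-1,m}=(W_i,W_{i+1},\ldots,W_s,Y_{i-1}^{m-1}),\qquad i=2,\ldots,s,
\end{equation*}
(with the convention that $U_{s-1,m}$ uses the past of any fixed receiver in the tail group). The required Markov chain $U_{s-1}\to U_{s-2}\to\cdots\to U_1\to X\to(Y_1,\ldots,Y_K)$ is then inherited from the physical degradedness of the channel. For each private message one expands $nR_i\leq I(W_i;Y_i^n\mid W_{i+1},\ldots,W_s)+n\epsilon_n$ via the chain rule and uses Csiszár's sum identity, together with the degradedness, to telescope the cross terms and obtain $R_i\leq I(Y_i;U_{i-1}|U_i)$. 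For the common message, Fano applied at each $Y_i$ with $i\geq s$ yields $nR_s\leq I(W_s;Y_i^n)+n\epsilon_n$, which single-letterizes to $R_s\leq I(Y_i;U_{s-1})$.

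The main obstacle is the converse bookkeeping: because the degradedness chain has only $s$ layers whereas there are $K$ receivers, one must be careful that the auxiliary $U_{s-1}$ is a valid "worst-receiver" surrogate for the entire tail group $(Y_s,\ldots,Y_K)$. This is handled by noting that $W_s$ is a pure common message, so it is legitimate to impose a separate bound $R_s\leq I(Y_i;U_{s-1})$ for every $i\geq s$ (the theorem states them as a family rather than taking a minimum), and the remaining Markov-chain conditions involve only the indices $1,\ldots,s-1$, for which the degradedness is the standard per-receiver one. Once this point is isolated, the telescoping, the identification, and the time-sharing argument that convexifies the region all proceed exactly as in Bergmans' proof.
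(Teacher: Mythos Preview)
Your proposal is correct and follows essentially the same approach the paper indicates: the paper's proof is simply a pointer to Theorem~3.1 of Diggavi--Tse, noting that achievability is by superposition coding and the converse by the standard Bergmans-type arguments, which is exactly what you have fleshed out. If anything, your write-up is more explicit than the paper's own proof, and your discussion of why the single auxiliary $U_{s-1}$ suffices for the whole tail group $(Y_s,\ldots,Y_K)$ is a useful clarification that the paper leaves implicit.
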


\begin{proof}
This is a direct extension of Theorem 3.1 of ~\cite{Dig_Tse}. The achievable scheme is based on the superposition coding. The outer-bound is proven using the standard arguments used in Theorem 3.1 of ~\cite{Dig_Tse}.
\end{proof}

Here, we use the above theorem to prove a key outer-bound for the compound MIMO broadcast channels.
\begin{thm}\label{thm:main}
Consider the broadcast channel \eqref{eq:channel_model}, where $J_K \geq M $, then
\begin{eqnarray}
\lim_{P \leftarrow \infty}\frac{\sum_{r=1}^{K-1}R_i+MR_K}{0.5\log_2 P} \leq M.
\end{eqnarray}
\end{thm}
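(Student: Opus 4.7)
The plan is to enhance the compound broadcast channel into a physically degraded broadcast channel with $K{-}1{+}M$ virtual receivers, apply Theorem~\ref{thm:TD_EX} to that enhanced channel, and then bound the resulting mutual-information sum by $\frac{M}{2}\log_2 P + o(\log P)$ using the fact that the strongest virtual receiver sees a rank-$M$ Gaussian linear mixture of $\mathbf{x}$.

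The genericity assumption on the channel realizations together with $J_K \geq M$ ensures that $\mathbf{h}_K^{\{1\}}, \ldots, \mathbf{h}_K^{\{M\}}$ are almost surely linearly independent. I would enhance receiver $r$, for $r=1,\ldots,K-1$, to observe
\begin{equation*}
\tilde{Y}_r = (Y_r, Y_{r+1}, \ldots, Y_{K-1}, Y_K^{\{1\}}, \ldots, Y_K^{\{M\}}),
\end{equation*}
while keeping each $Y_K^{\{s\}}$, $s=1,\ldots,M$, as an individual weak receiver. The inclusions $\tilde{Y}_1 \supseteq \cdots \supseteq \tilde{Y}_{K-1} \supseteq Y_K^{\{s\}}$ realize the physically-degraded chain required by Theorem~\ref{thm:TD_EX}, and since $\tilde{Y}_r \supseteq Y_r$, giving these enhanced observations to the receivers cannot shrink the capacity region. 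Setting the theorem's index to $K$ and summing its $K-1$ private-rate constraints with all $M$ copies of the common-rate constraint gives
\begin{equation*}
\sum_{r=1}^{K-1} R_r + M R_K \leq I(\tilde{Y}_1; X \mid U_1) + \sum_{r=2}^{K-1} I(\tilde{Y}_r; U_{r-1} \mid U_r) + \sum_{s=1}^{M} I(Y_K^{\{s\}}; U_{K-1}),
\end{equation*}
for some auxiliary chain $U_{K-1} \to \cdots \to U_1 \to X$.

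The remaining task is to show this right-hand side is at most $\frac{M}{2}\log_2 P + o(\log P)$. Because every $\tilde{Y}_r$ is contained in $\tilde{Y}_1$, data processing upgrades each $I(\tilde{Y}_r; U_{r-1} \mid U_r)$ to $I(\tilde{Y}_1; U_{r-1} \mid U_r)$, and the Markov relations $\tilde{Y}_1 \perp U_r \mid U_{r-1}$ then let the private-rate terms telescope to $I(\tilde{Y}_1; X \mid U_{K-1})$. If the common-rate sum can be shown to satisfy $\sum_{s=1}^M I(Y_K^{\{s\}}; U_{K-1}) \leq I(\tilde{Y}_1; U_{K-1}) + o(\log P)$, the chain-rule identity $I(\tilde{Y}_1; X \mid U_{K-1}) + I(\tilde{Y}_1; U_{K-1}) = I(\tilde{Y}_1; X)$ collapses everything into $I(\tilde{Y}_1; X)$. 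The composite channel $\mathbf{x} \mapsto \tilde{Y}_1$ is a $(K-1+M)\times M$ Gaussian linear map whose matrix has rank $M$ almost surely, hence $I(\tilde{Y}_1; X) \leq \frac{M}{2}\log_2 P + O(1)$, and normalizing by $\frac{1}{2}\log_2 P$ then yields the claimed bound.

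The main obstacle is the tight control of $\sum_{s=1}^M I(Y_K^{\{s\}}; U_{K-1})$: a naive per-term data-processing bound returns $M \cdot I(\tilde{Y}_1; U_{K-1})$, which introduces a spurious factor $M$ and destroys the collapse. I would close this gap by writing the sum as the joint mutual information $I(Y_K^{\{1\}}, \ldots, Y_K^{\{M\}}; U_{K-1})$ plus a Gaussian multi-information correction, dominating the joint term by $I(\tilde{Y}_1; U_{K-1})$ via data processing, and controlling the multi-information correction as an $o(\log P)$ residual by a case analysis on the rank of the input covariance combined with an EPI-based argument in the spirit of the finite-state outer bound of \cite{Hannan_BCCM}. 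This last step, which mirrors the technique of that reference, is where the analysis is most delicate.
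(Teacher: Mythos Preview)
Your enhancement and invocation of Theorem~\ref{thm:TD_EX} are exactly the paper's setup, but your proposed way of bounding the resulting sum contains a genuine gap: the inequality $\sum_{s=1}^{M} I(Y_K^{\{s\}};U_{K-1}) \le I(\tilde{Y}_1;U_{K-1}) + o(\log P)$ (equivalently, that the multi-information correction is $o(\log P)$) is \emph{false} in general. Take $U_{K-1}=X$ and let $\mathbf{x}=\sqrt{P}\,\mathbf{e}\,G$ with $\mathbf{e}$ a fixed unit vector and $G\sim\mathcal{N}(0,1)$. Then each $I(Y_K^{\{s\}};U_{K-1})=\tfrac{1}{2}\log P+O(1)$, so the left side is $\tfrac{M}{2}\log P+O(1)$, whereas $\tilde{Y}_1$ is a rank-one Gaussian observation of $G$, so $I(\tilde{Y}_1;U_{K-1})=I(\tilde{Y}_1;X)=\tfrac{1}{2}\log P+O(1)$. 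The discrepancy is $\tfrac{M-1}{2}\log P$, not $o(\log P)$; no EPI or rank case-analysis can salvage the collapse into $I(\tilde{Y}_1;X)$, because in this regime $I(\tilde{Y}_1;X)$ is strictly smaller (by $\Theta(\log P)$) than the quantity you must bound.

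The paper avoids this trap by \emph{not} attempting to collapse into $I(\tilde{Y}_1;X)$. After the same Theorem~\ref{thm:TD_EX} step, it applies only the subadditivity $\sum_{s} \mathds{h}(Y_K^{\{s\}}\mid v_{K-1})\ge \mathds{h}(\mathbf{Y}_K\mid v_{K-1})$ to the common-rate terms, then expands every mutual information into differential entropies. Because the enhanced observations are \emph{nested}, $\tilde{Y}_r=(Y_r,\tilde{Y}_{r+1})$, the resulting entropy differences pair up as $\mathds{h}(\tilde{Y}_r\mid v_r)-\mathds{h}(\tilde{Y}_{r+1}\mid v_r)=\mathds{h}(Y_r\mid \tilde{Y}_{r+1},v_r)$. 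Here the hypothesis $J_K\ge M$ is used for the first and only time: almost surely $\mathbf{h}_K^{\{1\}},\ldots,\mathbf{h}_K^{\{M\}}$ span $\mathds{R}^M$, so $\mathbf{h}_r=\mathbf{H}_K\boldsymbol{\phi}_r$ and hence $Y_r=\boldsymbol{\phi}_r^{\dagger}\mathbf{Y}_K-\boldsymbol{\phi}_r^{\dagger}\mathbf{z}_K+z_r$. Since $\mathbf{Y}_K\subseteq\tilde{Y}_{r+1}$, this gives $\mathds{h}(Y_r\mid\tilde{Y}_{r+1},v_r)\le \mathds{h}(-\boldsymbol{\phi}_r^{\dagger}\mathbf{z}_K+z_r)$, a constant independent of $P$. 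What remains is $\sum_{s}\mathds{h}(Y_K^{\{s\}})-\sum_i\mathds{h}(z_i)\le \tfrac{M}{2}\log P+O(1)$, with no $o(\log P)$ residual and no EPI. The missing idea in your plan is precisely this: exploit the nested structure and the span condition to bound each telescoped piece by a \emph{constant}, rather than trying to force everything back into a single mutual information.
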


\begin{proof}
It is sufficient to prove the result for the case where $J_r=1$ for $r=1,\ldots,K-1$ and $J_K=M$.  The unique realization of  receiver $r$ is denoted by $\mathbf{h}_r$, for $r=1,\ldots,K-1$. For simplicity, we denote the $M$ realizations of receiver $K$ as users $K$ to $M+K-1$, with channel $\mathbf{h}_{K}$, $\mathbf{h}_{K+1}$, $\ldots$, $\mathbf{h}_{M+K-1}$, with the corresponding noise $z_{K}$ to $z_{M+K-1}$.

We define $\mathbf{H}_r$ as $\mathbf{H}_r=[\mathbf{h}_{r}, \mathbf{h}_{r+1}, \ldots, \mathbf{h}_{M+K-1}]$, $\mathbf{z}_r$ as $\mathbf{z}_r=[z_r,\ldots,z_{M+K-1}]^{\dagger}$, and $\mathbf{y}_r$ as $\mathbf{y}_r=[y_r,\ldots,y_{M+K-1}]^{\dagger}$.

Now consider the following broadcast channel
\begin{eqnarray}
\mathbf{y}_r=\mathbf{H}_r^{\dagger}\mathbf{x}+\mathbf{z}_r, \ r=1,\ldots,K-1,\\
y_r= \mathbf{h}_r^{\dagger}\mathbf{x}+z_r, \ r=K,\ldots,M+K-1,
\end{eqnarray}
formed by giving the received signal of $y_{r+1},\ldots, y_{M+K-1}$ to receiver $r$.

The new channel has the degradedness property $\mathbf{x} \leftrightarrow \mathbf{y}_1 \leftrightarrow \ldots \leftrightarrow \mathbf{y}_{K-1}  \leftrightarrow (y_{K},\ldots,y_{M+K-1})$. Note that, if message $W_K$ can be decoded at receivers  $y_{K},y_{K+2}\ldots,y_{M+K-1}$ in the original channel, it can be decoded in all receivers of the new channel. Moreover, if message $W_r$, $1 \leq r \leq K-1$, can be decoded at receiver $r$ of the original channel, it can be decoded at receivers 1 to $r$ in the new channel. Therefore, by using Theorem~
\ref{thm:TD_EX}, there is a Markov chain of random variables $v_{K-1} \leftrightarrow v_{K-2} \ldots \leftrightarrow v_{1}   \leftrightarrow \mathbf{x}  \leftrightarrow (\mathbf{y}_1,\ldots, \mathbf{y}_K, y_{K+1},\ldots, y_{M+K-1})$ such that

\begin{align}\label{eq:1}
&R_1+\ldots+R_{K-1}+MR_K  \\ \nonumber
& \leq
I(\mathbf{x};\mathbf{y}_1|v_1) +  I(v_1;\mathbf{y}_2|v_2)+ \ldots+I(v_{K-2};\mathbf{y}_{K-1}|v_{K-1})+ \sum_{r=K}^{M+K-1} I(v_{K-1};y_r)\\ \nonumber
& =I(\mathbf{x};\mathbf{y}_1|v_1) +  I(v_1;\mathbf{y}_2|v_2)+ \ldots+I(v_{K-2};\mathbf{y}_{K-1}|v_{K-1})+ \sum_{r=K}^{M+K-1} \mathds{h}(y_r)- \mathds{h}(y_r|v_{K-1})\\ \nonumber
& \leq I(\mathbf{x};\mathbf{y}_1|v_1) +  I(v_1;\mathbf{y}_2|v_2)+ \ldots+I(v_{K-1};\mathbf{y}_{K-1}|v_{K-1}) \\
 \nonumber &
+\sum_{r=K}^{M+K-1} \mathds{h}(\mathbf{h}_r^{\dagger}\mathbf{x}+z_r)- \sum_{r=K}^{M+K-1} \mathds{h}(\mathbf{h}_r^{\dagger}\mathbf{x}+z_r|v_{K-1})
\\ \nonumber
& \overset{(a)}{\leq} I(\mathbf{x};\mathbf{y}_1|v_1) +  I(v_1;\mathbf{y}_2|v_2)+ \ldots+I(v_{K-2};\mathbf{y}_{K-1}|v_{K-1})+ \sum_{r=K}^{M+K-1} \mathds{h}(\mathbf{h}_r^{\dagger}\mathbf{x}+z_r)-  \mathds{h}(\mathbf{y}_K|v_{K-1})\\ \nonumber
& =  \mathds{h}(\mathbf{y}_1|v_1)-\mathds{h}(\mathbf{y}_1|v_1,\mathbf{x})+\mathds{h}(\mathbf{y}_2|v_2)-\mathds{h}(\mathbf{y}_2|v_2,v_1)+\ldots+\mathds{h}(\mathbf{y}_{K-1}|v_{K-1})-\mathds{h}(\mathbf{y}_{K-1}|v_{K-2},v_{K-1})\\ \nonumber
&- \mathds{h}(\mathbf{y}_K|v_{K-1}) + \sum_{r=K}^{M+K-1} \mathds{h}(\mathbf{h}_r^{\dagger}\mathbf{x}+z_r) \\ \nonumber
& \overset{(b)}{=}  \mathds{h}(\mathbf{y}_1|v_1)-\mathds{h}(\mathbf{y}_1|v_1,\mathbf{x})+\mathds{h}(\mathbf{y}_2|v_2)-\mathds{h}(\mathbf{y}_2|v_1)+\ldots+\mathds{h}(\mathbf{y}_{K-1}|v_{K-1})-\mathds{h}(\mathbf{y}_{K-1}|v_{K-2})\\ \nonumber
&- \mathds{h}(\mathbf{y}_K|v_{K-1}) + \sum_{r=K}^{M+K-1} \mathds{h}(\mathbf{h}_r^{\dagger}\mathbf{x}+z_r),
\end{align}
where (a) relies on the fact that $\sum_{r=K}^{M+K-1} \mathds{h}(\mathbf{h}_r^{\dagger}\mathbf{x}+z_r|v_{K-1}) \geq  \mathds{h}(\mathbf{h}_{K}^{\dagger}\mathbf{x}+z_{K},\ldots, \mathbf{h}_{M+K-1}^{\dagger}\mathbf{x}+z_{M+K-1} |v_{K-1})=\mathds{h}(\mathbf{y}_K|v_{K-1})$ and (b) relies on the Markov chain $v_{K-1} \ldots \leftrightarrow v_{1}   \leftrightarrow \mathbf{x}  \leftrightarrow (\mathbf{y}_1,\ldots, \mathbf{y}_K, y_{K+1},\ldots, y_{M+K-1})$.

Note that for $r < K$
\begin{align}\label{eq:2}
\mathds{h}(\mathbf{y}_r|v_r)-\mathds{h}(\mathbf{y}_{r+1}|v_r) & \\
& =  \mathds{h}(y_{r},  \mathbf{y}_{r+1}|v_r)-\mathds{h}(\mathbf{y}_{r+1}|v_r)
\\ &= \mathds{h}(y_{r} |  \mathbf{y}_{r+1}, v_r) \\
& \overset{(a)}{\leq} \mathds{h}(y_{r} |  \mathbf{y}_{K}, v_r) \\
&  \overset{(b)}{=} \mathds{h}(\boldsymbol{\phi}_{r}^{\dagger}\mathbf{y}_{K}^{\dagger}-\boldsymbol{\phi}_{r}^{\dagger}\mathbf{z}_{K}+z_r | \mathbf{y}_{K},v_r) \\
& = \mathds{h}(-\boldsymbol{\phi}_{r}^{\dagger}\mathbf{z}_{K}+z_r |  \mathbf{y}_{K}, v_r) \\
& \leq \mathds{h}(-\boldsymbol{\phi}_{r}^{\dagger}\mathbf{z}_{K}+z_r)
\end{align}
where (a) relies on the fact that $\mathbf{y}_{r+1}=[y_{r+1},\ldots,y_{K+1}, y_{K},\ldots,y_{M+K-1}]^{\dagger}$. In addition (b) relies on the  fact that $\mathbf{H}_K$ is full-rank and therefore $\mathbf{h}_r \in \textrm{Span}\{\mathbf{h}_{K}, \mathbf{h}_{K+1}, \ldots, \mathbf{h}_{M+K-1}\}$, for $r=1, \ldots, K-1$. Therefore, $\mathbf{h}_r =  \mathbf{H}_K \boldsymbol{\phi}_{r}$ for a vector $\boldsymbol{\phi}_{r} \in \mathds{R}^{M \times 1}$.
Then, $y_r=\mathbf{h}_r^{\dagger}\mathbf{x}+z_r=\boldsymbol{\phi}_{r}^{\dagger}\mathbf{H}_K^{\dagger}\mathbf{x}+z_r=\boldsymbol{\phi}_{r}^{\dagger}\mathbf{y}_{K}^{\dagger}-\boldsymbol{\phi}_{r}^{\dagger}\mathbf{z}_{K}+z_r$.

In addition, note that
\begin{eqnarray}\label{eq:3}
\mathds{h}(\mathbf{y}_1|v_1,\mathbf{x})=\mathds{h}(\mathbf{z}_1)=\sum_{i=1}^{M+K-1}\mathds{h}(z_i).
\end{eqnarray}
Therefore, using~\eqref{eq:1}, \eqref{eq:2}, and \eqref{eq:3}, we have,
\begin{align}
&R_1+\ldots+R_{K-1}+MR_K  \\
& \leq  \sum_{r=1}^{K-1} \mathds{h}(-\boldsymbol{\phi}_{r}^{\dagger}\mathbf{z}_{K}+z_r)+  \sum_{r=K}^{M+K-1} \mathds{h}(\mathbf{h}_r^{\dagger}\mathbf{x}+z_r)-\sum_{r=1}^{M+K-1}\mathds{h}(z_{r}) \\
& =\sum_{r=K}^{M+K-1} I(\mathbf{h}_r^{\dagger}\mathbf{x}+z_r, \mathbf{x}) + \sum_{r=1}^{K} \mathds{h}(-\boldsymbol{\phi}_{r}^{\dagger}\mathbf{z}_{K}+z_r)-\sum_{r=1}^{K-1}\mathds{h}(z_{r})
\end{align}
Note that the RHS of the above equation is in the order of $\frac{M}{2}\log_2 P$, and therefore the result is concluded.
\end{proof}

Using the above theorem, the converses of Theorems~\ref{thm:ach2} and ~\ref{thm:ach3} are easily proven.

\textbf{Converse for Theorem~\ref{thm:ach2}:}
\begin{proof}
From Theorem~\ref{thm:main}, we have
\begin{eqnarray}
\lim_{P \rightarrow \infty}\frac{R_1+\ldots+R_{r-1}+MR_r+R_r+\ldots+R_K}{0.5\log_2 P} \leq M, \ r=1,\ldots,K.
\end{eqnarray}
Adding the above $K$ inequalities, we have
\begin{eqnarray}
(M+K-1)\lim_{P \rightarrow \infty}\frac{R_1+\ldots+R_K}{0.5\log_2 P} \leq MK, \ r=1,\ldots,K.
\end{eqnarray}
\end{proof}

\textbf{Converse for Theorem~\ref{thm:ach3}:}
\begin{proof}
It is easily followed from Theorem~\ref{thm:main} and the fact that $\lim_{P \rightarrow \infty}\frac{R_r}{0.5\log2 P} \leq 1$ for $r=1,\ldots,M-1$.
\end{proof}

\section{Complex Channels}\label{sec:complex}
In~\cite{abolfazl_k_2}, a generalized version of Khintchin-Groshev Theorem (see~\cite{groshev-berink, groshev-Beresnevich}) has been used to establish a pseudo-MIMO approach for the interference management in real channels. For complex channels, a traditional approach is to transform the channel to a real channel with twice dimensions. However, in the transformed channel, the channel coefficients are not independent any more, and therefore, using the result of ~\cite{abolfazl_k_2} over the transformed channel is not straight-forward. Here in this section, we borrow a recent result from Number Theory to directly extend the machinery of~\cite{abolfazl_k_2} to complex channels. Therefore,  we can simply extend the results that we derived in the previous sections to complex channels as well.

Let us use the notation $\mathds{K}$ for $\mathds{C}$ or $\mathds{R}$. For any vector $\boldsymbol{\nu} \in \mathds{K}^{{\xi}-1}$, the multiplicative Diophantine exponent $\omega(\boldsymbol{\nu})$ is defined as

\begin{align}
\omega( \boldsymbol{\nu} ) = \sup \left\{ \eta  \Big{|} \
 |\sum_{i=1}^{{\xi}-1} \nu_i q_i+ p|  \leq (\frac{1}{\prod_{i=1}^{\xi-1} \max(1,q_i)})^{{\frac{\eta}{\xi-1}}} \ \textrm{for infinitely many}\ (p, q_1,q_2,...,q_{{\xi}-1})\in \mathds{Z}^{{\xi}} \right \}.
\end{align}

\begin{thm}\label{thm:comp}
Consider the mapping $\boldsymbol{\psi}=(\psi_1,\psi_2,\ldots, \psi_{{\xi}-1})$ from an open subset $\mathcal{U} \subset \mathds{K}^{d}$ to   $\mathds{K}^{{\xi}-1}$.
If  $1,\psi_1,\ldots, \psi_{{\xi}-1}$ are linearly independent in $\mathds{R}$, then
\begin{itemize}
\item For $\mathds{K}=\mathds{R}$,  $\omega( \boldsymbol{\psi} (\mathbf{x} ))$ is equal to ${\xi}-1$ for almost all $\mathbf{x} \in \mathcal{U}$~\cite{groshev-berink, groshev-Beresnevich}.
\item For $\mathds{K}=\mathds{C}$,  $\omega( \boldsymbol{\psi} (\mathbf{x} ))$ is equal to $\frac{{\xi}-2}{2}$ for almost all $\mathbf{x} \in \mathcal{U}$~\cite{complex_kh}.
\end{itemize}

\end{thm}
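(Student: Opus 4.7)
My plan is to recognize that Theorem~\ref{thm:comp} is essentially a packaging of two known deep results from metric Diophantine approximation on manifolds, and that the ``proof'' is really a verification that the hypothesis on $\boldsymbol{\psi}$ matches the non-degeneracy condition required in those cited works~\cite{groshev-berink, groshev-Beresnevich, complex_kh}. So the strategy is not to re-derive the Khintchine--Groshev transference from scratch, but to translate our statement into the standard language of non-degenerate manifolds and invoke the appropriate theorem in each field.

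First I would treat the real case. The generic value of $\omega$ on $\mathds{R}^{\xi-1}$ is $\xi-1$; the upper bound $\omega(\mathbf{y})\leq \xi-1$ for Lebesgue-almost every $\mathbf{y}$ is the easy (convergence) half of multiplicative Khintchine--Groshev, and the lower bound $\omega(\mathbf{y})\geq \xi-1$ is the divergence half, which is Dirichlet-type and essentially automatic. The nontrivial point is that when $\mathbf{y}$ is restricted to lie on the image of $\boldsymbol{\psi}(\mathcal{U})$, which has measure zero in $\mathds{R}^{\xi-1}$, we need that this image be a non-degenerate submanifold. The hypothesis that $1,\psi_1,\ldots,\psi_{\xi-1}$ are $\mathds{R}$-linearly independent (on an open subset of $\mathcal{U}$) is exactly the non-degeneracy condition used by Beresnevich and co-authors; under it, the multiplicative Khintchine--Groshev theorem on manifolds yields $\omega(\boldsymbol{\psi}(\mathbf{x}))=\xi-1$ for Lebesgue-a.e.\ $\mathbf{x}\in\mathcal{U}$.

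For the complex case, the same architecture applies but the generic exponent changes. Heuristically, each complex integer $q_i\in\mathds{Z}[i]$ carries two integer parameters and the numerator $p\in\mathds{Z}[i]$ likewise has two degrees of freedom, so the effective Dirichlet volume scales like the square of what it did in $\mathds{R}$. This halves the exponent and shifts the ``$-1$'' in $\xi-1$ to a ``$-2$'' in $\xi-2$, giving the expected generic value $(\xi-2)/2$. The corresponding manifold statement in $\mathds{C}$ is the recent complex Khintchine--Groshev theorem for non-degenerate complex manifolds~\cite{complex_kh}; again the hypothesis $1,\psi_1,\ldots,\psi_{\xi-1}$ $\mathds{R}$-linearly independent is precisely the non-degeneracy condition used there (note it is $\mathds{R}$-linear independence, not $\mathds{C}$-linear, because complex conjugation enters the analysis). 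Applying this theorem to $\boldsymbol{\psi}(\mathcal{U})$ yields the claim.

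The main obstacle is the bookkeeping step of checking that our hypothesis really does imply non-degeneracy in the technical sense of the cited theorems, i.e.\ that no open subset of $\boldsymbol{\psi}(\mathcal{U})$ is contained in an affine hyperplane of $\mathds{K}^{\xi-1}$. Linear independence of $1,\psi_1,\ldots,\psi_{\xi-1}$ over $\mathds{R}$ on an open set is equivalent to precisely this, so the translation is immediate, and the remainder of the proof is a direct citation. In the applications of this theorem in the paper the coordinate functions $\psi_i$ will be distinct monomials in independent channel coefficients, which on any open set are manifestly $\mathds{R}$-linearly independent, so the hypothesis will be verifiable by inspection.
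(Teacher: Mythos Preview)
Your reading is correct: the paper does not prove Theorem~\ref{thm:comp} at all. It is stated as a known result, with the real case attributed to~\cite{groshev-berink, groshev-Beresnevich} and the complex case to~\cite{complex_kh}, and is then used as a black box in the subsequent signal-design argument. Your proposal to treat it as a direct citation, after checking that the linear-independence hypothesis on $1,\psi_1,\ldots,\psi_{\xi-1}$ matches the non-degeneracy condition in those references, is exactly the intended route; you have in fact supplied more justification than the paper itself.

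One small caveat on the translation step: in the Kleinbock--Margulis framework, non-degeneracy of the manifold $\boldsymbol{\psi}(\mathcal{U})$ is a statement about partial derivatives up to some order spanning $\mathds{K}^{\xi-1}$, which is strictly stronger than ``no open piece lies in an affine hyperplane'' for merely smooth maps. Your equivalence goes through when the $\psi_i$ are real-analytic (or complex-analytic in the $\mathds{K}=\mathds{C}$ case), which is the situation in all applications here since the $\psi_i$ are monomials in the channel coefficients. It is worth making that analyticity assumption explicit when you invoke the cited theorems. Also note that in the paper's definition of $\omega(\boldsymbol{\nu})$ the approximating integers $(p,q_1,\ldots,q_{\xi-1})$ lie in $\mathds{Z}^{\xi}$ even in the complex case, not in $\mathds{Z}[i]^{\xi}$, so your Gaussian-integer heuristic for the exponent $(\xi-2)/2$ should be rephrased in terms of the two real degrees of freedom in the target value $\sum \nu_i q_i + p \in \mathds{C}$ rather than in the $q_i$.
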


In~\cite{abolfazl_k_2}, the case where~$\mathds{K}=\mathds{R}$ has been addressed. Here,  we focus on the case, where $\mathds{K}=\mathds{C}$.

Consider the  mappings $\boldsymbol{\psi} =(\psi_1,\psi_2,\ldots, \psi_{{\xi}-1})$ from an open subset $\mathcal{U} \subset \mathds{C}^{d}$ to   $\mathds{C}^{{\xi}-1}$, where
$1,\psi_1,\ldots, \psi_{{\xi}-1}$ are linearly independent in $\mathds{R}$. Then, Theorem~\ref{thm:comp} states that for $\eta =\frac{{\xi}-2}{2}+\epsilon$, $\epsilon >0$,  for almost all $\mathbf{x}\in \mathcal{U}$ and $(q_1,\ldots,q_{{\xi}-1},p)\in \mathds{Z}^{{\xi}}$, we have,
\begin{eqnarray}
|p+ \sum_{i=1}^{{\xi}-1} \psi_i(\mathbf{x})q_i| > \left(\frac{1}{\prod_{i=1}^{\xi-1}\max(1,q_i)}\right)^{\frac{\eta}{\xi-1}}.
\end{eqnarray}

Consider the Gaussian multiple-access channel with ${\xi}$ inputs with the channel gains $\nu_i$, $i=1,\ldots,{\xi}$, modeled as,
\begin{align}
y=\sum_{i=1}^{{\xi}}\nu_i x_i+ z,
\end{align}
where $\nu_{{\xi}}=1$,  $\nu_i=\psi_i(\mathbf{x})$ for an $\mathbf{x} \in \mathcal{U}$,  $i=1,\ldots,{\xi}-1$. Moreover, $z$ denotes  Gaussian complex noise with $z\sim\mathcal{CN}(0,1)$. Let us use the following constellation for each input,
\begin{eqnarray}
\{\lambda u | u \in \mathds{Z}, -Q< u <Q \}.
\end{eqnarray}
We choose $Q$ as
\begin{align}
Q=&\gamma P^{ \frac{1-\epsilon}{{\xi}+2\epsilon}},
\end{align}
where $\gamma$ is a constant. It is important to note that the rate of the constellation is at least
\begin{align}
\log_2{2Q-1} \cong   \frac{1-\epsilon}{{\xi}+2\epsilon} \log_2 (P) + \log (2\gamma_1).
\end{align}
This means that the rate of the constellation at each transmitter is about $\frac{1}{{\xi}} \log_2 (P)$. This is almost twice of the rate of the constellations we use at the transmitters for the real channels (see~\cite{abolfazl_k_2}).

We choose $\lambda$ such that $\mathds{E}[x_i^2] \leq \gamma_2 P$, for a constant $\gamma_2$. We have
$\mathds{E}[x_i^2] \leq \lambda^2 Q^2$. Therefore,
\begin{align}
\lambda = & \gamma_3 P^{ \frac{{\xi}-2+4\epsilon}{2({\xi}+2\epsilon)}},
\end{align}
where $\gamma_3$ is a function of $\gamma$, $\gamma_1$, ${\xi}$, and $\epsilon$.

Then the received symbol $y$ is a noisy version of a point from the following constellation,
\begin{align}
\mathcal{C}=\{ \lambda (\sum_{i=1}^{{\xi}}\nu_i u_i+ z), \textrm{where} \  u_i \in \mathds{Z} \cap (-Q, Q) \}.
\end{align}
From Theorem~\ref{thm:comp}, we know that the minimum distance of the constellation $\mathcal{C}$ is
\begin{align}
d_{\min}=\frac{\lambda}{(\max_i{q_i})^{\eta}}=\frac{\lambda}{Q^{\frac{{\xi}-2}{2}+\epsilon}},
\end{align}
almost surely. Then, it is easy to see that
\begin{align}
d_{\min}=\gamma_4 P^{\frac{\epsilon}{2}}.
\end{align}
%
%
%Assume that we want the minimum distance of this constellation to grow with $P$ as $P^{\frac{\epsilon}{2}}$ and also the
%power at each transmitter to be less that $\gamma P$.
%Therefore, we choose $Q$ and $\lambda$ as
%\begin{align}
%\frac{\lambda}{Q^{\frac{{\xi}-2}{2}+\epsilon}} & = P^{\frac{\epsilon}{2}},\\
%\lambda^2 Q^2 & =\gamma P.
%\end{align}
%Then, we have
%\begin{align}
%Q=&\gamma_1 P^{ \frac{1-\epsilon}{{\xi}+2\epsilon}},\\
%\lambda = & \gamma_2 P^{ \frac{{\xi}-2+4\epsilon}{2({\xi}+2\epsilon)}},
%\end{align}
%where $\gamma_1$ and $\gamma_2$ are functions of $\gamma$, ${\xi}$ and $\epsilon$.
%It is important to note that the rate of the constellation is
%\begin{align}
%\log_2{2Q} \cong   \frac{1-\epsilon}{{\xi}+2\epsilon} \log_2 (P) + \log (2\gamma_1).
%\end{align}
%This means that the rate of the constellation at each transmitter is about $\frac{1}{{\xi}} \log_2 (P)$. This is almost twice of the constellation we use at the transmitter for the real channels.

We have the following observations:
\begin{itemize}
\item[(i)] Since the minimum distance of the constellation is not zero almost surely, then there is a one-to-one mapping between the points of the constellation $\mathcal{C}$ and vectors $(u_1,\ldots, u_{{\xi}})$.
\item[(ii)] Since the minimum distance of the received constellation is growing with $P^{\frac{\epsilon}{2}}$, therefore $P_e$, the probability of incorrectly detecting a point of  the constellation from $y$, goes to zero, as  $P$ grows.
\end{itemize}
Then, using Fano's inequality, we can easily show that the rate of each transmitter is
$\log_2{2Q} \cong   \frac{1-2\epsilon}{{\xi}+2\epsilon} \log_2 (P) + \log (2\gamma)$. This means that each transmitter achieves the \emph{complex} DoF of $\frac{1}{{\xi}}$.

Using this approach, we can extend all the results presented in this paper to complex channels. In fact, in schemes presented in~\ref{sec:scheme1} and~\ref{sec:scheme2}, we choose $Q$ as $Q=\gamma P^{ \frac{1-\epsilon}{{\xi}+2\epsilon}}$. Then, we can achieve the rate twice of what is achievable for real channels. Since the DoF of the complex channel is defined as $d= \lim_{P \rightarrow \infty} \frac{C_{sum}}{\log_2(P)}$, then we achieve the same DoF of the real channels. 

\textbf{Remark:} Consider a compound complex $X$ channel with $M$ transmitters and $K$ receivers. Therefore, as proved, the complex DoF of this channel is equal to $\frac{MK}{M+K-1}$. Now let us transform the channel to a real channel with $2M$ transmitters and $2K$ receivers.  If we ignore the possibility of cooperation between real and imaginary parts of each transmitter/receiver, then we will have a real $X$ channel with $2M$ transmitters and $2K$ receivers. Therefore, the real DoF of the resulting channel is upper-bounded by $\frac{4MK}{2M+2K-1}$. This means that the complex DoF of the resulting channel is upper-bounded by $\frac{2MK}{2M+2K-1}$ or $\frac{MK}{M+K-0.5}$. We note that $\frac{MK}{M+K-0.5} \leq \frac{MK}{M+K-1}$. This means that ignoring the possibility of cooperation among the real and imaginary components of each transmitter/receiver results in a sub-optimal scheme.
\appendices

\section{Achievable Scheme For Theorem~\ref{thm:ach1}}\label{app:1}
In Section~\ref{sec:scheme1}, we explained the achievable scheme for Theorem~\ref{thm:ach1} for $M=2$ and $K=2$. Here, we explain it  step-by-step  for general value of $M$ and $K$.

\subsection{Encoding}
  Assume that the base station has message $W_r$ for receiver $r$ . $W_r$ is divided  into $M$ independent parts, i.e. $W_r=(W_{r1},W_{r2},\ldots, W_{rM})$. $W_{rt}$ will be sent through transmitter $t$ for receiver $r$. $W_{rt}$ itself is divided into $L_r$ parts, $W_{rt}=(W_{rt}^{(1)}, W_{rt}^{(2)},\ldots, W_{rt}^{(L_r)} )$. Then,
  $W^{(l)}_{rt}$ is encoded into the sequence $(u_{rt}^{(l)}[1], u_{rt}^{(l)}[2], \ldots, u_{rt}^{(l)}[T])$, where $T$ is the length of the codeword, and $u_{rt}^{(l)}[m]$, $m=1, \ldots, T$,  belongs to the integer constellation $(-Q, Q)_{\mathds{Z}}$.  The sequence $(u_{rt}^{(l)}[1], u_{rt}^{(l)}[2], \ldots, u_{rt}^{(l)}[T])$ is weighted by the modulation pseudo-vector $\nu^{(l)}_{rt}$. Each transmitter sends a weighted linear combination of the corresponding codewords. More precisely,
\begin{eqnarray}
x_t[m]= \lambda  \sum_{r=1}^{K} \sum_{l=1}^{L_{r}} \nu^{(l)}_{rt} u^{(l)}_{rt}[m].
\end{eqnarray}
 The normalizing constant $\lambda$ is used to guarantee the power constraint.

\subsection{Modulation Pseudo-Vectors}
Let us define the set $\mathcal{B}_r$, for $r=1,\ldots, K$, as follows:
\begin{eqnarray}
\mathcal{B}_{r}=\left \{ \prod_{r'=1, r' \neq r}^{K} \prod_{s=1}^{J_{r'}} \prod_{t=1}^{M} (h_{r't}^{\{s\}} )^{ \alpha_{r't}^{\{s\}}  }, \quad   1 \leq \alpha_{r't}^{\{s\}} \leq n_r, \quad r'\neq r   \right \},
\end{eqnarray}
where $n_r$ is a constant number. We use $\mathcal{B}_{r}$  as the set of the modulation pseudo-vectors for data sub-streams intended for receiver $r$. Therefore,
\begin{eqnarray}
\left \{ \nu^{(l)}_{rt}, \quad l=1,\ldots, L_{r} \right \}  = \mathcal{B}_{r}.
\end{eqnarray}
Note that $\nu^{(l)}_{rt} =\nu^{(l')}_{rt}$, if $l \neq l'$. Therefore,  $L_r$ is equal to $|\mathcal{B}_r|$. It is easy to see that $L_r=n_r^{M(\sum_{\hat{r}=1}^{K}J_{\hat{r}}-J_r)}$. Consider $L$ as a large enough integer. We  choose $n_r$
\begin{eqnarray}
n_r= \lfloor  L^{\frac{1}{ M(\sum_{\hat{r}=1}^{K}J_{\hat{r}}-J_r) }} \rfloor.
\end{eqnarray}
Therefore,  we have $L_r=L+o(L)$, for $r=1,\ldots,K$.

\subsection{Received Signals and Interference Alignment}
Let us focus on receiver $r$, when the channel state is $\mathbf{h}_r^{\{\hat{s}\}}$,  $\hat{s} \in \{ 1,\ldots, J_r\}$. The received signal is given by,
\begin{align}
y_r^{\{ \hat{s} \}}[m]& = \mathbf{h}_r^{{\{ \hat{s} \}}^{\dagger}} \mathbf{x}[m]+z_r^{\{ \hat{s}\}}[m] =\\
& =  \sum_{t=1}^{M}h_{rt}^{\{ \hat{s} \}} x_t[m]+z_1^{\{ \hat{s} \}}[m] = \\
& =  \lambda \sum_{t=1}^{M}h_{rt}^{\{ \hat{s} \}} \sum_{\hat{r}=1}^{K} \sum_{l=1}^{L_{\hat{r}}} \nu^{(l)}_{\hat{r}t} u^{(l)}_{\hat{r}t}[m]
+z_r^{\{ \hat{s}\}}[m] \\ \label{eq:4}
&
=  \lambda   \sum_{t=1}^{M} \sum_{l=1}^{L_{r}} h_{rt}^{\{ \hat{s} \}} \nu^{(l)}_{rt} u^{(l)}_{rt}[m] +
 \lambda \sum_{\hat{r}=1, \hat{r} \neq r}^{K}  \sum_{t=1}^{M} \sum_{l=1}^{L_{\hat{r}}} h_{rt}^{\{ \hat{s} \}} \nu^{(l)}_{\hat{r}t} u^{(l)}_{\hat{r}t}[m] +z_r^{\{ \hat{s}\}}[m].
\end{align}
Note that the first summation in the RHS of the above equations conveys information for receiver $r$, while the last summation is just interference for this receiver.

Since $\nu^{(l)}_{rt} \in \mathcal{B}_r$, then for any $r$ and $\hat{r}$, we have
\begin{eqnarray}
h_{rt}^{\{ \hat{s} \}} \nu^{(l)}_{\hat{r}t} \in h_{rt}^{\{ \hat{s} \}}. \mathcal{B}_{\hat{r}}.
\end{eqnarray}

We  observe three important properties:
\begin{itemize}
\item[(1)] If $t \neq \hat{t}$, then $h_{rt}^{\{ \hat{s} \}} \neq h_{r\hat{t}}^{\{ \hat{s} \}}$, almost surely. Therefore,
\begin{eqnarray}
 h_{rt}^{\{ \hat{s} \}}. \mathcal{B}_r \cap h_{r\hat{t}}^{\{ \hat{s} \}}. \mathcal{B}_r =\emptyset, \quad \forall \ t, \hat{t} \in \{1,\ldots,M \}, \ t \neq \hat{t}
\end{eqnarray}
Therefore, $ \bigcup_{t=1}^{M} h_{rt}^{\{ \hat{s} \}}. \mathcal{B}_r  =ML_r$, almost surely.  This means that at receiver $r$, $ML_r$ favorite data sub-streams are received with distinct coefficients.

\item[(2)]
It is easy to see that
\begin{eqnarray}
\left( \bigcup_{t=1}^{M} h_{rt}^{\{ \hat{s} \}}. \mathcal{B}_r \right) \bigcap \left(  \bigcup_{t=1}^{M} h_{rt}^{\{ \hat{s} \}}. \mathcal{B}_{\hat{r}} \right) = \emptyset.\ \forall \hat{r}, \ \hat{r}\neq r
\end{eqnarray}
This means that interference sub-streams are received at receiver $r$ with coefficients which are different from the coefficients of the favorite data sub-streams.

\item[(3)] Now, in~\eqref{eq:4}, we focus on the coefficients of the data sub-streams, intended for Receiver $\hat{r}$, $\hat{r}\neq r$, and cause interference at receiver $r$. More precisely, we focus on the coefficients of
     $\sum_{t=1}^{M} \sum_{l=1}^{L_{\hat{r}}} h_{rt}^{\{ \hat{s} \}} \nu^{(l)}_{\hat{r}t} u^{(l)}_{\hat{r}t}[m]$. Apparently, the coefficients $h_{rt}^{\{ \hat{s} \}} \nu^{(l)}_{\hat{r}t}$ are belong to $\bigcup_{t=1}^{M} h_{rt}^{\{ \hat{s} \}}. \mathcal{B}_{\hat{r}}$. However, it is easy to see that,
\begin{eqnarray}
| \bigcup_{t=1}^{M} h_{rt}^{\{ \hat{s} \}}. \mathcal{B}_{\hat{r}} | =n_{\hat{r}}^{M(\sum_{\bar{r}=1}^{K}{J_{\bar{r}}}-J_{\hat{r}}-1)}(n_{\hat{r}}+1)^M, \quad \hat{r} \neq {r}.
\end{eqnarray}

Remember that $| h_{rt}^{\{ \hat{s} \}}. \mathcal{B}_{\hat{r}}| =(n_{\hat{r}})^{M(\sum_{\bar{r}=1}^{K}{J_{\bar{r}}}-J_{\hat{r}})}$. This means that $| \bigcup_{t=1}^{M} h_{rt}^{\{ \hat{s} \}}. \mathcal{B}_{\hat{r}} |$ has almost the same cardinality as $|  h_{rt}^{\{ \hat{s} \}}. \mathcal{B}_{\hat{r}}|$, for $\hat{r} \neq r$  and $t=1, \ldots,M$.  It
 implies that the sets $h_{rt}^{\{ \hat{s} \}}. \mathcal{B}_{\hat{r}}$, $t=1,\ldots,M$, are almost the same with just a few different elements (compared to the size of each set). The interference sub-streams which arrived with the same coefficients are in fact aligned.
\end{itemize}

At receiver $r$, we merge the interference sub-streams with the similar coefficients, so we have
\begin{align}\label{eq:5}
y_r^{\{ \hat{s} \}}[m]=  \lambda   \sum_{t=1}^{M} \sum_{l=1}^{L_{r}} h_{rt}^{\{ \hat{s} \}} \nu^{(l)}_{rt} u^{(l)}_{rt}[m] +
 \lambda \sum_{\hat{r}=1, \hat{r} \neq r}^{K}  \sum_{l=1}^{\kappa_{\hat{r}}}  \bar{\nu}^{(l)}_{\hat{r},r,\hat{s}} \bar{u}^{(l)}_{\hat{r},r,\hat{s}}[m] +n_r^{\{ \hat{s}\}}[m],
\end{align}
where
$\kappa_{\hat{r}} = n_{\hat{r}}^{M(\sum_{\bar{r}=1}^{K}{J_{\bar{r}}}-J_{\hat{r}}-1)}(n_{\hat{r}}+1)^M$, and $\bar{\nu}^{(l)}_{\hat{r},r,\hat{s}} \in \bigcup_{t=1}^{M} h_{rt}^{\{ \hat{s} \}}. \mathcal{B}_{\hat{r}}$. In addition, $ \bar{u}^{(l)}_{\hat{r},r,\hat{s}}[m] \in (-MQ,MQ)_{\mathds{Z}}$. Therefore, we have noisy version of the integer combination of $\sum_{\hat{r}=1 \ \hat{r} \neq r }^{K}\kappa_{\hat{r}}+ML_r$ real numbers.
These real numbers have another important property. All of these numbers are monomial functions of channel coefficients and  these monomial functions are linearly independent.

Note that $\frac{ML_r}{\sum_{\hat{r}=1, \ \hat{r} \neq r }^{K}\kappa_{\hat{r}}+ML_r}$ the sub-streams in~\eqref{eq:5} carries favorite message for receiver $r$.  Since $\kappa_{\hat{r}}= L+o(L)$ and $L_r=L+o(L)$, then $\frac{ML_r}{\sum_{\hat{r}=1, \ \hat{r} \neq r }^{K}\kappa_{\hat{r}}+ML_r} \simeq \frac{M}{M+K-1}$.

Note that at each receiver, the total available DoF is just one. Here we develop a signaling scheme such that each data sub-stream has DoF of $\frac{1}{\xi}$ DoF, where
\begin{eqnarray}
\xi= \max_{r}\{\sum_{\hat{r}=1}^{K} \kappa_{\hat{r}} -\kappa_{r} +ML_r \}.
\end{eqnarray}
Therefore, at receiver $r$, a $\frac{ML_r}{\xi} \simeq \frac{M}{M+K-1}$ portion of the available DoF is used for receiving favorite data sub-streams, while   $\frac{\sum_{\hat{r}=1}^{K} \kappa_{\hat{r}} -\kappa_{r}}{\xi}\simeq 1- \frac{M}{M+K-1}$ is wasted for interference.

\subsection{Choosing $Q$ and $\lambda$}
Now we choose $Q$ as follows:
\begin{eqnarray}
Q = (\frac{P}{M})^{\frac{1-\epsilon}{2(\xi + \epsilon) }},
\end{eqnarray}
where $\epsilon$ is an arbitrary small constant. Note that $u^{(l)}_{rt}[m]$ is from the integer constellation in  $(-Q,Q)$, where the rate of this constellation is $\log_2 (2Q)= \frac{1-\epsilon}{2(\xi + \epsilon)} \log_2(\frac{P}{M})+ 1$. It is easy to see that,
\begin{align}
\mathbb{E} [x^2_t[m]]&=\lambda^2 \Gamma^2 Q^2,
\end{align}
where
\begin{eqnarray}
\Gamma^2= \sum_{r=1}^{K} \sum_{l=1}^{L_{r}} \left(\nu^{(l)}_{rt}\right)^2=\sum_{r=1}^K \sum_{\nu \in \mathcal{B}_r} \nu_r^2.
\end{eqnarray}
We choose $\lambda$ such that
\begin{eqnarray}
\mathbb{E} [x^2_t[m]] \leq \frac{P}{M}.
\end{eqnarray}
One choice for $\lambda$ is
\begin{eqnarray}
\lambda =\frac{P^{\frac{1}{2}}}{\sqrt{M}\Gamma Q}=\frac{1}{\Gamma} \left(\frac{P}{M}\right)^{ \frac{\xi-1+2\epsilon}{2(\xi+\epsilon)}}.
\end{eqnarray}

\subsection{Constellation Formed At Each Receiver}
At receiver $r$, when the channel state is $\mathbf{h}_r^{\{\hat{s}\}}$,  $\hat{s} \in \{ 1,\ldots, J_r\}$,  the received signal at  time $m$ is a noisy version of a point from the constellation $\mathcal{C}_r$, where
\begin{eqnarray*}
\mathcal{C}_r = \left\{  \lambda    \sum_{t=1}^{M} \sum_{l=1}^{L_{r}} h_{rt}^{\{ \hat{s} \}} \nu^{(l)}_{rt} u^{(l)}_{rt} +
 \lambda \sum_{\hat{r}=1, \hat{r} \neq r}^{K}  \sum_{l=1}^{\kappa_{\hat{r}}}  \bar{\nu}^{(l)}_{\hat{r},r,\hat{s}} \bar{u}^{(l)}_{\hat{r},r,\hat{s}} , u^{(l)}_{21} \in (-Q,Q)_{\mathds{Z}}, \ \bar{u}^{(l)}_{\hat{r},r,\hat{s}} \in (-MQ,MQ)_{\mathds{Z}}
\right\}.
\end{eqnarray*}
Using the Theorem 4 of~\cite{abolfazl_k_2}, we can show that the minimum distance of the constellation is $\eta \left( \frac{P}{M} \right)^{\epsilon}$, almost surely. Here $\eta= \frac{1}{ \Gamma  M^{ (\xi+\epsilon) } }$.

 This means that
\begin{itemize}
\item[(i)] There is a one to one mapping between $\left(\left\{ u^{(l)}_{rt} \right\}_{\begin{smallmatrix} t=1,\ldots,M \\ l=1,\ldots,L_r \end{smallmatrix}},  \left\{\bar{u}^{(l)}_{\hat{r},r,\hat{s}}\right\}_{\begin{smallmatrix} \hat{r}=1,\ldots,K, \  \hat{r} \neq r \\ l=1,\ldots,\kappa_{\hat{r}} \end{smallmatrix} } \right)$ and points of the constellation $\mathcal{C}_r$.
\item[(ii)] In high power, we can de-noise the received signal, the detect the point of the constellation $\mathcal{C}_r$ with vanishing probability of error and find the unique corresponding  \\
    $\left(\left\{ u^{(l)}_{rt}[m] \right\}_{\begin{smallmatrix} t=1,\ldots,M \\ l=1,\ldots,L_r \end{smallmatrix}},  \left\{\bar{u}^{(l)}_{\hat{r},r,\hat{s}}[m]\right\}_{\begin{smallmatrix} \hat{r}=1,\ldots,K, \  \hat{r} \neq r \\ l=1,\ldots,\kappa_{\hat{r}} \end{smallmatrix} } \right)$.
\end{itemize}

 Then, we pass the string $(\hat{u}^{(l)}_{rt}[1],\hat{u}^{(l)}_{rt}[2],\ldots, \hat{u}^{(l)}_{rt}[T])$ to the decoder to decode $\hat{W}_{rt}^{(l)}$, for $l=1,\ldots, L_r$, and $t=1,\ldots, M$.

\subsection{Performance Analysis}
 Probability of error of detecting  $\left(\left\{ \hat{u}^{(l)}_{rt}[m] \right\}_{\begin{smallmatrix} t=1,\ldots,M \\ l=1,\ldots,L_r \end{smallmatrix}},  \left\{\hat{\bar{u}}^{(l)}_{\hat{r},r,\hat{s}}[m]\right\}_{\begin{smallmatrix} \hat{r}=1,\ldots,K, \  \hat{r} \neq r \\ l=1,\ldots,\kappa_{\hat{r}} \end{smallmatrix} } \right)$ goes to zero as $P \rightarrow \infty$.
 Then, using the fact that $u^{(l)}_{rt}[m]$ is from the integer constellation in  $(-Q,Q)$ with rate $\log_2 (2Q)= \frac{1-\epsilon}{2(\xi + \epsilon)} \log_2(\frac{P}{M})+ 1$, we can show that each of the data sub-streams $W_{rt}^{(l)}$
 achieves the DoF of $\frac{1-\epsilon}{\xi + \epsilon}$. Therefore, we achieve the DoF of
$ML_r \frac{1-\epsilon}{\xi + \epsilon}$ at receiver $r$. Thus, we achieve the total DoF of
\begin{eqnarray}
(M\sum_{r=1}^{K}L_r)\frac{1-\epsilon}{\xi + \epsilon}.
\end{eqnarray}
Since $L_r=L+o(L)$, and $\xi =(K-1)L+ML+o(L)$,  by choosing large enough $L$ and small enough $\epsilon$, we can achieve a DoF, arbitrary close to $\frac{MK}{M+K-1}$.

\section*{Acknowledgment}
{The author would like to thank Professor David Tse for many helpful comments and long discussions.

\bibliographystyle{IEEE}

\begin{thebibliography}{10}


\bibitem{abolfazl_k_2}
A.~S. Motahari, M.A. Maddah-Ali S.~Oveis~Gharan, and A.~K. Khandani,
\newblock ``{Forming Pseudo-MIMO by Embedding Infinite Rational Dimensions
  Along a Single Real Line: Removing Barriers in Achieving the DOFs of Single
  Antenna Systems},''
\newblock {\em Arxiv preprint arXiv:0908.2282}, Aug. 2009.

\bibitem{Br_Cap_RG_J}
H.~Weingarten, Y.~Steinberg, and S.~Shamai (Shitz),
\newblock ``The capacity region of the {Gaussian} {MIMO} broadcast channel,''
\newblock {\em IEEE Trans. Information Theory}, 2004,
\newblock Submitted for Publication.

\bibitem{Hannan_BCCM}
H.~Weingarten, S.~Shamai, and G.~Kramer,
\newblock ``On the compound {MIMO} broadcast channel,''
\newblock in {\em Information Theory Workshop,}, Tel Aviv, Israel, Jan. 2007.

\bibitem{Lapidoth}
A.~Lapidoth, S.~Shamai, and M.~Wigger,
\newblock ``{On the capacity of Fading MIMO broadcast channels with imperfect
  transmitter side-information},''
\newblock {\em arXiv:cs/0605079v1}, May 2006.

\bibitem{maddah_x_tech1}
M.~A. Maddah-Ali, S.~A. Motahari, and Amir~K. Khandani,
\newblock ``Communication over {X} channel: Signalling and multiplexing gain,''
\newblock Tech. {R}ep. UW-ECE-2006-12, University of Waterloo, July 2006.

\bibitem{maddahali2008com}
M.~A. Maddah-Ali, A.~S. Motahari, and A.~K. Khandani,
\newblock ``Communication over {MIMO $X$} channels: Interference alignment,
  decomposition, and performance analysis,''
\newblock {\em Information Theory, IEEE Transactions on}, vol. 54, no. 8, pp.
  3457--3470, August 2008.

\bibitem{jafar2008dfr}
S.~A. Jafar and S.~Shamai,
\newblock ``{Degrees of Freedom Region of the MIMO $X$ Channel},''
\newblock {\em Information Theory, IEEE Transactions on}, vol. 54, no. 1, pp.
  151--170, 2008.

\bibitem{cadambe2008iaa}
V.~R. Cadambe and S.~A. Jafar,
\newblock ``{Interference Alignment and Degrees of Freedom of the $K$-User
  Interference Channel},''
\newblock {\em Information Theory, IEEE Transactions on}, vol. 54, no. 8, pp.
  3425--3441, 2008.

\bibitem{cadambe2008dfw}
V.~R. Cadambe and S.~A. Jafar,
\newblock ``{Degrees of Freedom of Wireless $X$ Networks},''
\newblock {\em Information Theory, 2008. ISIT 2008. IEEE International
  Symposium on}, pp. 1268--1272, July 2008.

\bibitem{Bresler-Parekh-tse}
Guy Bresler, Abhay Parekh, and David Tse,
\newblock ``The approximate capacity of the many-to-one and one-to-many
  {G}aussian interference channels,''
\newblock {\em http://arxiv.org/abs/0809.3554}, 2008.

\bibitem{Sridharan}
A.~Vishwanath-S. Sridharan, S.~Jafarian and S.~A. Jafar,
\newblock ``{Capacity of Symmetric K-User Gaussian Very Strong Interference
  Channels},''
\newblock {\em Arxiv preprint arXiv:0808.2314}, Aug. 2008.

\bibitem{sridharan3llc}
S.~Sridharan, A.~Jafarian, S.~Vishwanath, S.~A. Jafar, and S.~Shamai,
\newblock ``{A Layered Lattice Coding Scheme for a Class of Three User Gaussian
  Interference Channels},''
\newblock {\em Arxiv preprint arXiv:0809.4316}, September 2008.

\bibitem{abolfazl-shahab-amir}
A.~S. Motahari, S.~O. Gharan, and A.~K. Khandani,
\newblock ``On the degrees-of-freedom of the three-user gaussian interfererence
  channel: The symmetric case,''
\newblock {\em Presented at IEEE International Symposium on Information
  Theory}, July 2009.

\bibitem{Etkin-Ordentlich}
R.~Etkin and E.~Ordentlich,
\newblock ``On the degrees-of-freedom of the {K}-user {G}aussian interference
  channel,''
\newblock {\em http://arxiv.org/abs/0901.1695}, 2009.

\bibitem{Abolfazl_X}
A.S. Motahari, S.~O. Gharan, and A.~K. Khandani,
\newblock ``Real interference alignment with real numbers,''
\newblock {\em http://arxiv.org/abs/0908.1208}, Aug. 2009.

\bibitem{Dig_Tse}
S.N. Diggavi and D.N.C. Tse,
\newblock ``On opportunistic codes and broadcast codes with degraded message
  sets,''
\newblock in {\em Information Theory Workshop,}, March 2006.

\bibitem{groshev-berink}
V.I. Bernik, D.Y. Kleinbock, and G.A. Margulis,
\newblock ``Khintchine-type theorems on manifolds: the convergence case for
  standard and multiplicative versions,''
\newblock {\em International Mathematics Research Notices}, , no. 9, pp.
  453–486, 2001.

\bibitem{groshev-Beresnevich}
V.V. Beresnevich,
\newblock ``A groshev type theorem for convergence on manifolds,''
\newblock {\em Acta Mathematica Hungarica 94}, , no. 1-2, pp. 99--130, 2002.

\bibitem{complex_kh}
D.~Kleinbock,
\newblock ``{Baker-Sprindzhuk conjectures for complex analytic manifolds},''
\newblock {\em Algebraic groups and arithmetic}, pp. 539--553, Oct. 2002.

\bibitem{Jafar_compund}
T.~Gou, S.~A. Jafar, and C.~Wang,
\newblock ``{On the Degrees of Freedom of Finite State Compound Wireless
  Networks - Settling a Conjecture by Weingarten et. al},''
\newblock {\em Arxiv preprint arXiv:0909.4177}, September 2009.


\end{thebibliography}

\end{document}